\documentclass[sigconf]{acmart}
\AtBeginDocument{%
  }

\AtBeginDocument{%
  \bibliographystyle{unsrt}         % 按引用顺序编号
}

\makeatletter
\newcommand{\equalfootnote}[1]{%
  \begingroup
  \renewcommand{\thefootnote}{}%
  \long\def\@makefntext##1{%
    \noindent$*$\ ##1%
  }%
  \footnotetext{#1}%
  \addtocounter{footnote}{-1}%
  \endgroup
}
\makeatother

\copyrightyear{2026}
\acmYear{2026}
\setcopyright{cc}
\setcctype{by}
\acmConference[DAC '26]{63rd ACM/IEEE Design Automation Conference}{July 26--29, 2026}{Long Beach, CA, USA}
\acmBooktitle{63rd ACM/IEEE Design Automation Conference (DAC '26), July 26--29, 2026, Long Beach, CA, USA}
\acmDOI{10.1145/3770743.3806375}
\acmISBN{979-8-4007-2254-7/2026/07}

\begin{document}

%%
%% The "title" command has an optional parameter,
%% allowing the author to define a "short title" to be used in page headers.
\title{Scheduling Cause-Effect Chains without Timing Anomalies in End-to-End Latency}

% \author{Yixuan Zhu$^{\dagger *}$, Bo Zhang$^{\dagger *}$, Yinkang Gao$^{\dagger}$, Haoyuan Ren$^{\dagger}$, Cheng Tang$^{\dagger}$, Caixu Zhao$^{\dagger}$, Lei Gong$^{\dagger}$, Teng Wang$^{\ddagger}$, Wenqi Lou$^{\ddagger}$, Xi Li$^{\dagger}$}
% \affiliation{%
%   \institution{$^{\dagger}$University of Science and Technology of China, Hefei 230026,
% China}
%   \institution{$^{\ddagger}$Suzhou Institute for Advanced Research, University of Science
% and Technology of China, Suzhou 215123, China
% }
% }
% \email {{zhuyixuan, sazb, gaoyinkang, jackren, tangcheng, cathyz}@mail.ustc.edu.cn}
% \email{{leigong0203, wangt63, louwenqi,llxx}@ustc.edu.cn}
% \renewcommand{\authors}{Yixuan Zhu, Bo Zhang, Yinkang Gao, Haoyuan Ren, Cheng Tang, Caixu Zhao, Lei Gong, Teng Wang, Wenqi Lou, Xi Li}

\author{Yixuan Zhu*}
\affiliation{%
  \institution{University of Science and Technology of China}
  \city{Hefei}
  \postcode{230026}
  \country{China}
}
\email{zhuyixuan@mail.ustc.edu.cn}

\author{Bo Zhang*}
\affiliation{%
  \institution{University of Science and Technology of China}
  \city{Hefei}
  \postcode{230026}
  \country{China}
}
\email{sazb@mail.ustc.edu.cn}

\author{Yinkang Gao}
\affiliation{%
  \institution{University of Science and Technology of China}
  \city{Hefei}
  \postcode{230026}
  \country{China}
}
\email{gaoyinkang@mail.ustc.edu.cn}

\author{Haoyuan Ren}
\affiliation{%
  \institution{University of Science and Technology of China}
  \city{Hefei}
  \postcode{230026}
  \country{China}
}
\email{jackren@mail.ustc.edu.cn}

\author{Cheng Tang}
\affiliation{%
  \institution{University of Science and Technology of China}
  \city{Hefei}
  \postcode{230026}
  \country{China}
}
\email{tangcheng@mail.ustc.edu.cn}

\author{Caixu Zhao}
\affiliation{%
  \institution{University of Science and Technology of China}
  \city{Hefei}
  \postcode{230026}
  \country{China}
}
\email{cathyz@mail.ustc.edu.cn}

\author{Lei Gong}
\affiliation{%
  \institution{University of Science and Technology of China}
  \city{Hefei}
  \postcode{230026}
  \country{China}
}
\email{leigong0203@ustc.edu.cn}

\author{Teng Wang}
\affiliation{%
  \institution{Suzhou Institute for Advanced Research, University of Science and Technology of China}
  \city{Suzhou}
  \postcode{215123}
  \country{China}
}
\email{wangt63@ustc.edu.cn}

\author{Wenqi Lou}
\affiliation{%
  \institution{Suzhou Institute for Advanced Research, University of Science and Technology of China}
  \city{Suzhou}
  \postcode{215123}
  \country{China}
}
\email{louwenqi@ustc.edu.cn}

\author{Xi Li}
\affiliation{%
  \institution{University of Science and Technology of China}
  \city{Hefei}
  \postcode{230026}
  \country{China}
}
\email{llxx@ustc.edu.cn}

\renewcommand{\shortauthors}{Zhu et al.}

%%
%% By default, the full list of authors will be used in the page
%% headers. Often, this list is too long, and will overlap
%% other information printed in the page headers. This command allows
%% the author to define a more concise list
%% of authors' names for this purpose.
\renewcommand{\shortauthors}{Zhu et al.}
% \renewcommand{\shortauthors}{Zhu et al.}

%%
%% The abstract is a short summary of the work to be presented in the
%% article.
\begin{abstract}
In real-time systems, both individual task execution and data propagation must meet strict timing constraints. Cause–effect (CE) chains are widely used to analyze such behaviors by end-to-end latency. However, timing anomalies (TAs) can distort it, where a local reduction in execution times
leads to an increase in the overall end-to-end latency.
As a result, precisely analyzing the upper bounds of the latency becomes challenging, and such systems typically exhibit larger upper bounds than TA-eliminated systems.
Existing studies either eliminate TAs by completely sacrificing average latency to simplify analysis or, despite adopting complex safe analysis methods, do not eliminate TAs effectively, still having high latencies.

To address this issue, we identify two basic causes of TAs in end-to-end latency. Based on these causes, we propose the first treatment that eliminates TAs in the latency with negligible average latency loss using Deterministic Data Flow (DDF). We further formally prove its TA-free property. Therefore, we can get a precise upper bound for latency when all jobs execute with their worst-case execution times. 
Experimental results show that it effectively reduces the maximum end-to-end latency, the average latency, and latency jitter compared with the state-of-the-art (SOTA) method.
\end{abstract}

%%
%% The code below is generated by the tool at http://dl.acm.org/ccs.cfm.
%% Please copy and paste the code instead of the example below.
%%
% \begin{CCSXML}
% <ccs2012>
%    <concept>
%        <concept_id>10010520.10010570.10010571</concept_id>
%        <concept_desc>Computer systems organization~Real-time operating systems</concept_desc>
%        <concept_significance>500</concept_significance>
%        </concept>
%  </ccs2012>
% \end{CCSXML}

% \ccsdesc[500]{Computer systems organization~Real-time operating systems}

%%
%% Keywords. The author(s) should pick words that accurately describe
%% the work being presented. Separate the keywords with commas.

\keywords{Real-time systems, End-to-End Latency, Timing Anomalies}
%% A "teaser" image appears between the author and affiliation
%% information and the body of the document, and typically spans the
%% page.

% \received{20 February 2007}
% \received[revised]{12 March 2009}
% \received[accepted]{5 June 2009}

% \{llxx,  leigong0203\}@ustc.edu.cn}
%%
%% This command processes the author and affiliation and title
%% information and builds the first part of the formatted document.
\maketitle
\equalfootnote{Equal contribution. Corresponding authors Xi Li and Lei Gong.}
\setcounter{footnote}{0}

\vspace{-0.2cm}
\section{Introduction}
% Real-time systems typically operate as multi-stage processing pipelines. For instance, autonomous driving systems include perception, planning, and control stages. These systems are highly timing-sensitive, requiring control actions to meet strict deadlines to prevent catastrophic outcomes. To model and analyze such timing behaviors, cause–effect (CE) chains are widely used.

% Real-time systems usually operate as multi-stage processing pipe-lines. For instance, autonomous driving systems involve perception, planning, and control stages. These systems are highly timing-sensitive, requiring control actions to be executed within strict deadlines to avoid catastrophic outcomes. To model and analyze such timing behaviors and ensure real-time guarantees, cause–effect (CE) chains are widely adopted.

Real-time systems typically operate as multi-stage processing pipe-lines~\cite{tang2023reaction, jiang2023analysis,tsi}.
For example, autonomous driving systems include perception, planning, and control stages~\cite{li2024data}. These systems are highly timing-sensitive,
requiring control actions to meet strict deadlines to prevent catastrophic outcomes. To model and analyze such timing behaviors,
cause–effect (CE) chains are widely used.
\begin{figure}
    \centering
    \includegraphics[width=\linewidth]{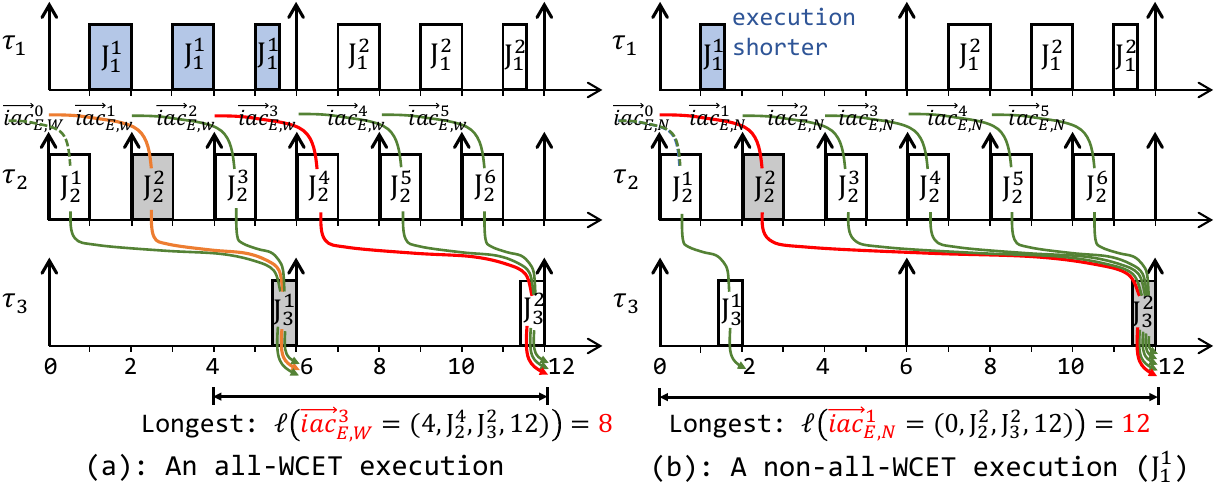}
    % \vspace{-0.8cm}
    \caption{
    Timing anomaly in the data propagation of CE chain $E =\tau_2 \rightarrow \tau_3$ under interference from an unrelated task $\tau_1$.
    The three periodic tasks: $\tau_1 = (6, 0, 0.5, 2.5)$, $\tau_2 = (2, 0, 0.5, 1)$, $\tau_3 = (6, 0, 0.5, 0.5)$ are scheduled by rate monotonic, where each tuple denotes $(period, phase, BCET, WCET)$.
    The colored data path represents the first response to an external event.
    % 在图下方标注了the structure of the longest (red) chain leading to Max Response Time (MRT), denoted as $(externalEventTime, processingJobsChain, actuationTime)$.
    % The colored data path shows how the system first reacts to an external event through a sequence of jobs.
    The structure of the longest (red) chain, which determines the MRT, is annotated below each case in the form $\overrightarrow{iac} = (extEventTime,\ procJobChain,\ actTime)$.
    In case (a), where all jobs are executed by WCETs, the longest (red) path leads to MRT=8.
    In case (b), $J_1^1$ executing shorter results in a longer path, increasing the MRT to 12, which means TA occurs.
    % In case (a), where all jobs execute at their WCETs, the longest (red) path yields an MRT of 8. In case (b), the shorter execution of $J_1^1$ results in a longer path, increasing the MRT to 12.
    % \\
    % In (a), all jobs are executed by WCETs. Multiple immediate forward augmented chains (job-level) are constructed by~\cite{gunzel2023compositional} and get the MRT by the longest, i.e., $\overrightarrow{iac}^3_{E, W}$ with length 8. In (b), $J_1^1$'s time is reduced, not all WCETs. But we can get a longer chain $\overrightarrow{iac}^1_{E, N}$ with length 12, a bigger reaction time than MRT.   
    }
    \label{fig:TAExample}
    \vspace{-0.45cm}
\end{figure}

% A CE chain can be modeled as a linear directed acyclic graph, where nodes represent tasks and edges denote their data dependencies. For example $\tau_2 \rightarrow \tau_3$ indicates that task \(\tau_3\) is data-dependent on \(\tau_2\).
% Its timing behavior is characterized by the end-to-end latency, comprising two key attributes: 
% the Maximum Reaction Time (MRT)—the longest interval between an external event and the first corresponding system response—
% and the Maximum Data Age (MDA)—the longest duration between data acquisition and its actual use, representing data freshness~\cite{guenzel2024end}.
A CE chain can be modeled as a linear directed acyclic graph, where nodes represent tasks and edges denote their data dependencies. For example $\tau_2 \rightarrow \tau_3$ indicates that task \(\tau_3\) is data-dependent on \(\tau_2\).
Its timing behavior is characterized by the end-to-end latency~\cite{davare2007period, vincentelli2007optimizing, zhu2013optimization, hamann2017communication, teper2022end,gunzel2021timing,gunzel2023compositional, TA_analysis_1}, including: 
\emph{Maximum Reaction Time} (MRT)—the longest interval between an external event and the first corresponding system response—
and \emph{Maximum Data Age} (MDA)—the longest duration between data acquisition and its actual use, i.e., data freshness~\cite{guenzel2024end}.

However, even in \emph{time-triggered fixed-priority preemptive} (TT-FP) systems under implicit communication\footnote{Implicit communication means each job reads at its start and writes at its finish.}~\cite{gunzel2021timing,tti}, \textit{timing anomalies} (TAs) can adversely affect the end-to-end latency of CE chains. Taking MRT as an example, TA is a phenomenon where reducing the execution time of tasks on other chains unexpectedly increases the reaction time of the current chain, as shown in Fig.~\ref{fig:TAExample}. 
This phenomenon is reflected in the difficulty of precisely analyzing latency upper bounds and typically larger the upper bounds compared with TA-eliminated systems\footnote{Eliminating TAs has been shown always to reduce worst-case performance metrics of real-time systems, as demonstrated in prior  studies~\cite{TAFree1,dai2021timing,lin2023scheduling,gunzel2021timing}.}. 
This is due to even slight variations in execution times may reshape job-level data propagation paths, leading to larger end-to-end latency than all-WCET execution and even period-level increases. 
For instance, in Fig.~\ref{fig:TAExample}, under the all-WCET case (a), job $J_2^2$ first influences $J_3^1$, whereas under the non-all-WCET case (b), it first influences $J_3^2$, resulting in a reaction time increase.
% increased by task $\tau_3$'s period.
% Moreover, for end-to-end analysis in systems with TAs, exhaustive timing scenario analysis is infeasible; only an over-approximation of the worst case, which may never occur in practice, can be constructed.
Moreover, for end-to-end analysis in systems with TAs, exhaustive timing scenario analysis is infeasible; only an approximate worst-case construction, which may never occur, can be derived.
Therefore, by constraining possible data-propagation paths, it may eliminate pessimistic scenarios in TA systems. 

Existing studies~\cite{graham1969bounds,voudouris2017timing,TAFree1,dai2021timing, lin2023scheduling,lin2025quasi} about TAs mainly focus on the worst-case response time~\cite{maiza2019survey}. While research on TAs in end-to-end latency of CE chains is limited, it can be categorized into two directions. 
First, work~\cite{gunzel2021timing} actively eliminates TAs to simplify analysis at the scheduling level by enforces all tasks to execute online with WCETs. Thereby, obtaining safe upper bounds directly from all-WCET execution offline. However, since task execution times are usually shorter than WCETs, this treatment results in a large average latency.
Second, other works~\cite{gunzel2023compositional, TA_analysis_1} propose safe analysis methods that do not modify the scheduling, yielding safe latency upper bounds that account for TA effects. 
For example,~\cite{gunzel2023compositional} proposes an analysis framework that combines simulated execution with abstract chains to get safe bounds without sacrificing average latency. 
Nevertheless, since TA persists in these systems, the upper bounds on latency may remain high. 
This limitation may pose drawbacks for real-time systems that rely on worst-case behavior.

% To address these problems, we propose a treatment based on Deterministic Data Flow (DDF)  to eliminate TA in end-to-end latency under FP scheduling, without sacrificing average performance. 
% To address these problems, we propose the first treatment that eliminates TA in end-to-end latency with negligible loss in average latency by enforcing Deterministic Data Flow (DDF). DDF is a set of fixed communication relations between jobs. We realize DDF by achieving two objectives: 1) \textbf{RAW}  --- enforcing deterministic job-level producer–consumer execution dependencies while maintaining fixed-priority (FP) scheduling, and 2) \textbf{RFR} --- ensuring that each job \textbf{reads} data \textbf{from} its \textbf{right} producer as specified in the DDF semantics. Specifically, in the offline, we perform a WCET-based simulation under the original FP scheduling to extract job-level communication relations as the DDF and derive execution dependencies between jobs from the DDF. 
% We further modify task attributes to ensure that the derived execution dependencies are preserved during online FP scheduling.

To address these problems, we propose the first treatment that eliminates TA in end-to-end latency with negligible average latency loss by enforcing Deterministic Data Flow (DDF). DDF is a set of fixed communication relations between jobs. We realize it by achieving two objectives: 
1) \textbf{ Read–After–Write (RAW)}  --- 
enforcing deterministic job-level producer–consumer execution dependencies, while maintaining TT-FP scheduling, 
% enforcing the start of a consumer job's read only after the finish of its producer job's write
and 2) \textbf{Read–From–Intended (RFI)} --- ensuring that each job reads data from its intended producer as specified in the DDF. 
Specifically, in the offline, we perform an all-WCET execution under the original FP scheduling to extract job-level communication relations as a DDF and derive job-level execution dependencies from the DDF. 
%the DDF and derive execution dependencies between jobs from the DDF. 
We further modify task attributes to ensure that the derived execution dependencies are preserved during online TT-FP scheduling.
% To ensure the analyzability and provable safety of the latency bound, we adjust task attributes offline to integrate the execution dependencies into a unified new FP scheduling for online execution. 
During online execution,
for DDF objective 1) \textbf{RAW}, the \textbf{modified task attributes} ensure that job-level producer–consumer execution dependencies are preserved;
for DDF objective 2) \textbf{RFI}, a \textbf{multi-buffer communication mechanism}~\cite{caspi2008semantics} guarantees each job correctly reads data from its intended producer as defined in the DDF. Then, we prove its TA-free attribute by the invariance of job-level dataflow relations and the safe boundaries of jobs' read and write times under our treatment. 
For the latency analysis, due to TA-free, we can directly get the precise maximum latency via an all-WCET execution. 
Experiments show it can reduce MRT by 9\%–12\% on average, reduce average latency by 11\%-41\% compared to other TA-free treatment, and lower latency jitter by 53\%–68\%. 
Our contributions are as follows: 
% (up to 61\% in the best case)
% \vspace{-0.45cm}
% Experimental results show that it can reduce MRT by 7\%–11\% on average (up to 64\% in the best case), improving average performance by 10\%, and lowering execution jitter by 14\%–22\%. Our contributions are as follows: 
\begin{itemize}
    \item \textbf{Key Insight}. We identify two causes of TA in end-to-end latency: variations in structural change to job-level dependency (immediate forward) chains and job read/write times.
    \item \textbf{TA-free Treatment}. To the best of our knowledge, we present the first treatment to eliminate TA in end-to-end latency for cause–effect chains under FP scheduling with negligible average end-to-end latency loss.
    \item \textbf{Modeling Approach}. We define a more general immediate forward chain, enabling end-to-end latency analysis 
    for both register and multi-buffer communication.
    % for more communication models.
    \item \textbf{Correctness Proof of our treatment}. We formally prove our treatment is free from TAs in end-to-end latency. 
    \item \textbf{Experimental Evaluation}. Results show our method can effectively reduce the maximum end-to-end latency, average end-to-end latency, and end-to-end latency jitter.  
\end{itemize}

% \vspace{-0.5cm}
\section{PRELIMINARY}
\subsection{System Model}
\label{lab:System Model}
In this paper, cause–effect (CE) chains and CE graphs are used to model.
The CE chain is represented as a linear directed acyclic graph (DAG), denoted by
$E = {\tau_1 \rightarrow \tau_2 \rightarrow \cdots \rightarrow \tau_n}$,
where each node is a periodically time-triggered (TT) task, and the "$\rightarrow$" indicates that the succeeding task data depends on the preceding one.
Each task $\tau_i$ is characterized by a quadruple $(T(\tau_i), O(\tau_i),  B(\tau_i), W(\tau_i))$, denoting its period, phase, BCET, and WCET.
The actual execution time of a task instance lies within the interval $[BCET, WCET]$.
Multiple CE chains that share common nodes can be combined into a CE graph.
The system may consist of multiple disjoint CE graphs that collectively describe the periodic tasks set $\mathbb{T}$.

Tasks set $\mathbb{T}$ execute on a single-core platform under an FP scheduling policy, such as the RM scheduling algorithm.
Each task can generate multiple execution instances, referred to as jobs.
The $k$-th job of task $\tau_i$ is denoted by $J_i^k$, with release time $r(J_i^k)$, start execution time $s(J_i^k)$, and finish execution time $f(J_i^k)$.
% Since the system adopts an implicit communication model~\cite{gunzel2023compositional}, the read event time of a job $J_i^k$, denoted as $re(J_i^k)$, is equal to $s(J_i^k)$; The write event time $we(J_i^k)$ is equal to $f(J_i^k)$.

Since the system uses an implicit communication~\cite{gunzel2023compositional},
the read and write event times of job $J_i^k$ are aligned with the start and finish of it. Therefore, the read and write event time $re(J_i^k)$ and $we(J_i^k)$ of a job, is respectively equal to its $s(J_i^k)$ and $f(J_i^k)$. But the first task of a CE chain typically corresponds to an external sensor sampling task executed at a fixed sampling interval. To avoid sampling jitter that may degrade control performance, it is commonly assumed the jobs of such tasks sample immediately upon release, i.e., $re(J_i^k) = r(J_i^k)$.
\vspace{-0.7cm}

\subsection{Job Chain and End-to-End latency}
\label{sec:Job Chain and End-to-End latency}
% The job chains are the foundation for analyzing the MRT and MDA.
% Here, we adopt the definition from Dürr et al.~\cite{durr2019end} to describe it. The propagation of data along the CE chain $E$ can be represented by a job chain, which is a sequence of jobs $(J_1, J_2, \dots, J_{|E|})$. Each $J_i$ corresponds to the job of the $i$-th task in $E$.
% These jobs satisfy the data dependency: for any $i \in \{1, \dots, |E|-1\}$, $we(J_i) \le re(J_{i+1})$.

To capture the data propagation, Dürr et al.~\cite{durr2019end} introduced the immediate forward job chain $\overrightarrow{ic}_{E, S}^m$, which describes the first response process to the $m+1$-th job’s sampled data for CE chain E in schedule $S$. Specifically, the direct successor job of each job is defined as the first job that responds to its output data. 

However, their definition applies only to register communication and cannot be generalized to more flexible models, such as multi-buffer communication.
To this end, we provide a more general definition of immediate forward chain based on the communication relation (data dependency) between jobs. Therefore we define a function $R_{\text{com}}(J_i)$ represents the set of jobs communicate with $J_i$;

% \textbf{Definition 1} (immediate forward job chain): For any $i \in \{1, 2, $ \\ $\dots, |E|-1\}$, the job $J_{i+1}$ is defined as the first read job in task $E(i+1)$ that is affected by its predecessor job $J_i$. 

\textbf{Definition 1} (immediate forward job chain): 
For the $m+1$-th job $J_1^{m+1}$ of the head task in a CE chain $E$ under schedule $S$, the immediate forward job chain $\overrightarrow{ic}_{E, S}^m$ is defined as a sequence of jobs
$(J_1, J_2, \dots, J_{|E|})$, where $J_1=J_1^{m+1}$.
For any $i \in [ 1, |E|-1 ]$, the $J_{i+1}$ is the first read job of $E(i+1)$ that is affected by its predecessor $J_i$:
% For the $m+1$-th job $J_1^{m+1}$ of head task in CE chain $E$ under scheduling $S$, 
% $\overrightarrow{ic}_{E, S}^m$ is a sequence of jobs $(J_1, J_2, \dots, J_{|E|})$ .
% Where, for any $i \in \{1, 2, \dots, |E|-1\}$, the $J_{i+1}$ is defined as the first read job of task $E(i+1)$ affected by its predecessor $J_i$: %with two cases:

\textbf{Case 1}: If $J_i$ appears as a writing job for $E(m+1)$ in the $R_{\text{com}}$, i.e., $J_i \in wj(R_{\text{com}}, m+1)$, then $J_{i+1}$ is defined as the first reading job that directly communicates with it: $J_{i+1} = arg \ min_{J \in R_{\text{com}}(J_i)} id(J)$. 
% \textbf{Case 1}: If $J_i$ appears as a writing job in the $R_{\text{com}}$, i.e., $J_i \in wj(R_{\text{com}})$, then $J_{i+1}$ is defined as the first reading job that directly communicates with it: $J_{i+1} = arg \ min_{J \in R_{\text{com}}(J_i)} id(J)$. 

\textbf{Case 2}: If $J_i$ does not appear as a writing job for $E(m+1)$, i.e., $ J_i\notin wj(R_{\text{com}}, m+1)$, it indirectly influences subsequent jobs from the same task $E(i)$ through internal state propagation~\cite{zhang2025optimizing}. These subsequent jobs may, in turn, affect later reading jobs by Case 1. Formally, $J_{i+1} = arg \ min_{J \in \{R_{\text{com}}(J_i')|J_i' \in sub(J_i) \land J_i' \in wj(R_{\text{com}},m+1)\}} id(J)$.

Here, $sub(J_i)$ denotes the set of subsequent jobs of the same task of $J_i$;
and $id(J)$ is the id number of job $J$.

% Furthermore, to capture the complete data-flow behavior, G{\"u}nzel et al.~\cite{gunzel2021timing} define the corresponding immediate forward augmented job chain $\overrightarrow{iac}_{E, S}^m$ on the CE chain $E$ and schedule $S$. Specifically, by adding an external event and an actuation event to the beginning and end of $\overrightarrow{ic}_{E, S}^m$, $\overrightarrow{iac}^m_{E, S}$ is represented as $(z, J_1, J_2, \dots, J_{|E|}, z')$. The external event occurs immediately after the $m$-th sampling, i.e., $z = r_e^{E(1)}(m)$; the first job corresponds to the next instance of the first task (sampling job) in the CE chain $E$, i.e., $J_1 = E(1)(m+1)$; and the actuation corresponds to the write event of the last job, i.e., $z' = w_e(J_{|E|})$. The chain's length is Eqn. (x).

Furthermore, to capture the timing of data-propagation, G{\"u}nzel et al.~\cite{gunzel2021timing} define the immediate forward augmented job chain.
% , with length Eqn.~\ref{lab:lengthComp}.
% Specifically, by adding an external event and an actuation event to the beginning and end of $\overrightarrow{ic}_{E, S}^m$, $\overrightarrow{iac}^m_{E, S}$ is represented as $(z, \overrightarrow{ic}_{E, S}^m, z')$. 

\textbf{Definition 2} (immediate forward augmented job chain): The $\overrightarrow{iac}_{E, S}^m$ is the sequence of $(z, \overrightarrow{ic}_{E, S}^m, z')$, where
\vspace{-0.1cm}
\begin{itemize}
    \item $\overrightarrow{ic}_{E, S}^m=(J_1, \dots, J_{|E|})$ is obtained by Def. 1, where $J_1=J_1^{m+1}$;
    \item 
    $z = re(J_1^m)$ denotes the time of the earliest external activity that could be sampled by the first job $J_1^{m+1}$ of $\overrightarrow{ic}_{E,S}^m$, i.e., the external activity occurring immediately after 
    the job $J_1^{m+1}$.
    \item 
    $z'=we(J_{|E|})$, which is the time of actuation corresponding to the write event of the last job $J_{|E|}$ of $\overrightarrow{ic}_{E,S}^m$.
    % \item 
    % $z = re(J_1^m)$ represents the time of external activity occuring immediately after the $m$-th job of task $E(1)$;
    % \item $\overrightarrow{ic}_{E, S}^m$ is constructed by Definition 1;
    % \item 
    % $z' $ equals $ we(J_{|E|})$, which is the time of actuation corresponding to the write event of the last job in forward chain.
\end{itemize}
The chain $\overrightarrow{iac}_{E, S}^m$'s length is 
\vspace{-0.2cm}
\begin{align}
\ell(\overrightarrow{iac}_{E, S}^m) = z' - z.
\label{lab:lengthComp}
\end{align}

% Furthermore, to capture the timing of data-propagation, G{\"u}nzel et al.~\cite{gunzel2021timing} define the immediate forward augmented job chain $\overrightarrow{iac}_{E, S}^m$. 
% % Specifically, by adding an external event and an actuation event to the beginning and end of $\overrightarrow{ic}_{E, S}^m$, $\overrightarrow{iac}^m_{E, S}$ is represented as $(z, \overrightarrow{ic}_{E, S}^m, z')$. 
% Specifically, it can be represented as $(z, \overrightarrow{ic}_{E, S}^m, z')$ by adding the time of an external activity and an actuation to the beginning and end of $\overrightarrow{ic}$.
% The external activity occurs immediately after the $m$-th sampling, i.e., $z = r_e^{E(1)}(m)$; the first job corresponds to the next instance of the first task (sampling job) in the CE chain $E$, i.e., $J_1 = E(1)(m+1)$; and the actuation corresponds to the write event of the last job, i.e., $z' = w_e(J_{|E|})$. The chain's length is Eqn.~\ref{lab:lengthComp}.
% \vspace{-0.2cm}

Our definition is compatible with register communication, which can be achieved by defining the communication relation $R_{\text{com}}$ such that each reading job always reads from the most recent writing job.
In Fig. ~\ref{fig:TAExample}(b), the $\overrightarrow{iac}^2_{E,S}$ is given by $(0, J_2^2, J_3^2, 12)$ with $l(\overrightarrow{iac}^2_{E, S}) = 12$.

The MRT of CE chain $E$ under scheduling $S$ is the upper bound of the length of all valid $\overrightarrow{iac}$s according to the definition, i.e., 
% The MRT of CE chain $E$ under scheduling $S$ is the upper bound of the lengths of all valid $\overrightarrow{iac}$s derived from a specific job-execution-time combination represented by $S$.
% The MRT of CE chain $E$ under schedule $S$ is the upper bound of the length of all valid $\overrightarrow{iac}$ (Eqn.~\ref{label:CompMRT1}).
%, where a chain is valid if it appears during the system’s steady-state operation~\cite{gunzel2023compositional}.
\vspace{-0.1cm}
\begin{align}
\mathrm{MRT}(E,S) := \max \Bigl\{ \ell\bigl(\overrightarrow{iac}_{E,S}^m\bigr) \ \big|\ m \in \mathbb{N},\ \overrightarrow{iac}_{E,S}^m\ \text{valid} \Bigr\}.
\label{label:CompMRT1} 
\end{align}

% \vspace{-0.5cm}

Nevertheless, since $\mathrm{MRT}$ and $\mathrm{MDA}$ are proven equivalent in Thm.14~\cite{gunzel2023equivalence}, we use $\mathrm{MRT}$ as the representative example later.

\section{Key Insight for Timing Anomalies}
\label{sec:insight}

For the MRT, TAs occur when reducing the execution times of jobs in other chains results in a longer reaction time for the current chain. More precisely, under an all-WCET execution, the longest immediate forward augmented job chain $\overrightarrow{iac}$ obtained may not be the actual longest one due to varying job execution times.
The analysis of MRT can be summarized in two steps:

(1). Construct all immediate forward
job chains $\overrightarrow{ic}$s and $\overrightarrow{iac}$s, which depend on the jobs' read and write event times.

(2). Identification of the longest $\overrightarrow{iac}$, whose length is the MRT.
As defined in Eqn.~\ref{lab:lengthComp}, the length is the duration between $z'$ (the actuation time) and $z$ (the earliest possible time of the external activity that the actuation at $z’$ responds to).
% They correspond to the time of the previous sampling job's read event and the write event of the last job in the augmented job chain.

By the two steps, TAs of MRT arise from the \textbf{two causes}:

(1).
% All-WCET scheduling does 构造出所有的  $\overrightarrow{ic}$s among all possible schedulings.
The $\overrightarrow{ic}$s constructed by all-WCET execution may not cover all possible chains (composed of different jobs) that could arise from all non-all-WCET executions. 
For example, in Fig.~\ref{fig:TAExample}(b) a new additional chain $\overrightarrow{ic}^{2}_{E,S} = ( J_2^2, J_3^2)$ created under non-all-WCET.

% The $\overrightarrow{ic}$s constructed by WCET may not cover all possible chains (composed of different jobs) that could arise from different combinations of job execution times. 
% For example, in Fig.~\ref{fig:TAExample}(b) a new additional chain $\overrightarrow{ic}^{2}_{E,S} = ( J_2^2, J_3^2)$ created under non-WCET.

(2). 
Even if all possible $\overrightarrow{ic}$s are enumerated, for a given $\overrightarrow{ic}$, the $\overrightarrow{iac}$ generated by all-WCET execution is not guaranteed to be the longest one.
% all-WCET scheduling does not guarantee that the $\overrightarrow{iac}$ constructed is the longest one.
% among all possible schedulings.
This is because the read and write event times of jobs corresponding to $z$ and $z’$ of $\overrightarrow{iac}$ may not be guaranteed to bound those that may occur under all non-all-WCET executions.

% (1). The $\overrightarrow{ic}$s constructed by WCET may not cover all possible chains (composed of different jobs) that could arise from different combinations of job execution times. 
% Meanwhile, the WCET-based $\overrightarrow{iac}$s derived from these $\overrightarrow{ic}$s may not include the longest chain. For example, in Fig.~\ref{fig:TAExample}(b) an longer chain $\overrightarrow{iac}^{2}_{E,S} = (0, J_2^2, J_3^2, 12)$ produces under non-WCET.

% (2). The read and write event times
% of each job obtained by WCET-based scheduling are not guaranteed to be the bounds of other schedulings from combinations of various execution times. 
% \section{Treatment of Eliminating Timing Anomalies in End-to-End Latency}

% This chapter addresses the two causes of timing anomalies identified in Section~\ref{sec:insight} and, based on this, proposes a treatment that eliminates TAs in end-to-end latency.
% \vspace{-0.8cm}
\section{Treatment of Eliminating Timing Anomalies}

To address the two causes of timing anomalies identified in Sec.~\ref{sec:insight}, we propose a treatment that eliminates TAs in end-to-end latency.

(1) For Cause 1: to avoid new immediate forward chains in non-all-WCET executions relative to the all-WCET case, we enforce Deterministic Data Flow (DDF), as proved in Thm.~\ref{thm:inva}. DDF fixes the communication relations between producer and consumer jobs through two objectives: \textbf{RAW} (Read--After--Write), requiring a consumer job to start reading only after its producer job finishes writing, and \textbf{RFI} (Read--From--Intended), requiring each consumer job to read from its intended producer job even if earlier outputs from the same producer task are available.
(2). For cause 2: Our goal is to ensure the offline-determined boundaries of jobs' read and write times remain safe in all-WCET and non-all-WCET executions online. This is achieved by realizing DDF without altering the time-triggered FP (TT-FP) scheduling , thereby leveraging its property in Thm.~\ref{lem:boudrw} to fulfill the goal.

% (2). Unifying the scheduling under PF, such that the read and write event times of each job obtained during offline analysis constitute the lower and upper bounds, respectively, of the corresponding event times during online execution. The theoretical foundation of this property will also be presented in the next chapter.

% An overview of our treatment is in Fig.~\ref{fig:overview}, comprising offline analysis and online execution. The blue modules are offline processes, while the green ones are online execution. The analysis of the latency and TA-free proof will be detailed in the next chapter.

% Our treatment is shown in Fig.~\ref{fig:overview}. In the offline, the DDF is constructed by WCET-based execution, and job-level execution dependencies are enforced by modifying task attributes to ensure the RAW. In the online, tasks are scheduled with new attributes under FP, while multi-buffer communications ensure the RFI, preserving the offline DDF and thereby eliminating TAs proved 
% in Theorem~\ref{them:tafree}.

\begin{figure}
    \centering
    \includegraphics[width=\linewidth]{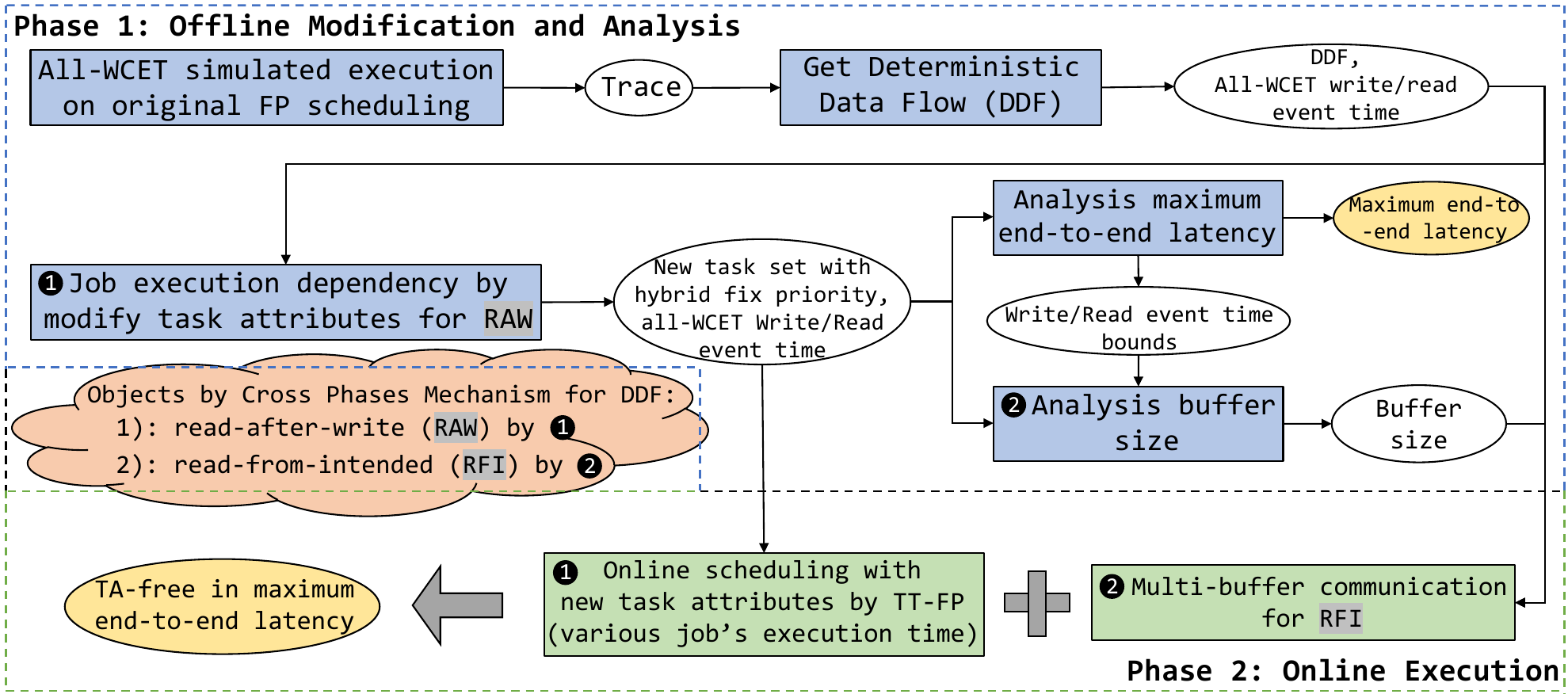}
    \vspace{-0.7cm}
    \caption{Overview of our treatment for TA-free
    in Latency}
    \label{fig:overview}
    \vspace{-0.5cm}
\end{figure}

Our treatment is shown in Fig.~\ref{fig:overview}. In the offline, the DDF is constructed by all-WCET execution, and RAW is ensured by modifying task attributes under TT-FP scheduling.
Online, besides enforcing RAW through scheduling, RFI is ensured by the multi-buffer communication mechanism. Ultimately, RAW and RFI jointly preserve the offline DDF and thereby eliminate TAs, as proven in Thm.~\ref{them:tafree}.

\vspace{-0.15cm}
\subsection{Obtaining Deterministic Data Flow (DDF)}
First, in the offline, we perform an all-WCET simulated execution under the original scheduling to obtain a scheduling trace recording the start and finish times of jobs.
Only a simulation over $[0,2H + O_{\max}]$ is required, where $H$ is the hyperperiod and $O_{\max}$ is the max task phase, as the schedule repeats~\cite{gunzel2023compositional} beyond $2H + O_{\max}$.

Then the DDF is extracted from the schedule trace under register communication, where each read job receives data from the most recent writer. In Fig.~\ref{fig:twoMechainForDDF}(b), the solid arrows denote the job-level communication relations of the DAG in Fig.~\ref{fig:twoMechainForDDF}(a) over one hyperperiod under all-WCET execution. The colored task-level causal relations in the CE chain are thus mapped to the corresponding same-colored job-level communication relations, which together form the DDF.

\vspace{-0.15cm}
\subsection{Realization of DDF under TT-FP Scheduling}
The goal of realizing the DDF is to ensure that, during online execution, inter-job communication remains consistent with the DDF despite variations in jobs' execution times. To achieve this, we introduce two cross-phase mechanisms: modification of the task's attribute under the TT-FP Scheduling 
and a multi-buffer communication mechanism, which respectively ensure the \textbf{RAW} and \textbf{RFI}.
% The former enforces that a read job starts only after its corresponding write job has produced the required data (finish execution), while the latter ensures, when multiple data versions coexist, the read job accesses the correct one—thus preserving the offline read–write behavior in the DDF.

\subsubsection{Modification of Task Attribution for Read-After-Write (RAW)} 
\label{sec:Modification of Task Attribution}
\ 
\newline
\indent
% Realizing the DDF requires that the corresponding producer consumer execution dependencies (RAW) be preserved during online execution. In Fig.~\ref{fig:twoMechainForDDF}(b), the DDF of the task $\tau_3$ and $\tau_6$ is defined by the blue solid arrows, and the resulting execution dependencies are shown in Fig.~\ref{fig:twoMechainForDDF}(c) as the blue dotted arrows. These dependencies must be enforced online.
% Specifically, to preserve the communication relation $J_6^1$ \textit{reads} $J_4^1$ in the DDF of Fig.~\ref{fig:twoMechainForDDF}(b), the start of $J_6^1$ in the online execution of Fig.~\ref{fig:twoMechainForDDF}(c) must be delayed until $J_4^1$ finishes.
Realizing DDF requires preserving the corresponding producer--consumer execution dependencies (RAW) during online execution. In Fig.~\ref{fig:twoMechainForDDF}(b), the blue solid arrows denote the DDF between $\tau_3$ and $\tau_6$, and the resulting execution dependencies are shown as blue dotted arrows in Fig.~\ref{fig:twoMechainForDDF}(c). Specifically, to preserve the communication relation $J_6^1$ \textit{reads} $J_4^1$ in the DDF of Fig.~\ref{fig:twoMechainForDDF}(b), the start of $J_6^1$ in the online execution of Fig.~\ref{fig:twoMechainForDDF}(c) must be delayed until $J_4^1$ finishes.

To enforce the desired execution dependencies, explicit constraints are introduced between the job pairs. A simple solution is to adopt an ET model, e.g., letting the finish of $J_4^1$ trigger the release of $J_6^1$. However, this would yield a hybrid ET-TT system, thereby complicating latency analysis and safety proofs. Instead, we enforce equivalent dependencies by modifying task attributes while preserving TT-FP: each reader is assigned a release time no earlier than its writer’s and an additional strictly lower priority, preventing it from preempting the writer. The details are as follows:

We instantiate each job in the hyperperiod as a new task for a new task set $\mathbb{T'}$, whose execution time interval inherits from its old task, and the period is the hyperperiod of the old task set $\mathbb{T}$.

The initial phase of each new task is  $O(\gamma_i) = T(ort(\gamma_i)) * id(\gamma_i) + O(ort(\gamma_i))$, where $\text{ort}(\gamma_i)$ denotes the original task in $\mathbb{T}$ for new task $\gamma_i$, and $\text{id}(\gamma_i)$ represents the id of the job in the offline phase's all-WCET execution to which $\gamma_i$ corresponds.
If $\gamma_i$ is a reader, let $\mathbb{WS} (\gamma_i)$ denote the set of all writers in DDF for $\gamma_i$. Its phase should be adjusted to ensure that its release time does not precede that of its dependent writer.
Formally, this adjustment is defined as  $O^*(\gamma_i) = max(O(\gamma_i), max_{J' \in \mathbb{WS}(\gamma_i)}O^*(J'))$.

% New task is assigned a priority group, denoted as $\text{NP}$, where: $\text{NP}[0]$ is the priority of its corresponding task in the old task set $\mathbb{T}$; $\text{NP}[1]$ records the execution order among job pairs that communicate in DDF. The writing task precedes the corresponding reading task, as shown in Fig.~\ref{fig:twoMechainForDDF}(c). The priority arbitration mechanism is:
% First determines execution orders according to $\text{NP}[1]$ if tasks have relationships in DDF, selecting the subset ($HPS$) of tasks with the highest precedence.
% Since the execution order is pairwise among communicating tasks, which is a partial order, $HPS$ may contain multiple tasks.
% Then, a secondary arbitration is performed to select the task with the highest priority in the $HPS$ according to $\text{NP}[0]$.

Each new task is assigned a priority group $\text{NP}$, where $\text{NP}[0]$ denotes the priority of its corresponding task in the original task set $\mathbb{T}$, and $\text{NP}[1]$ specifies the DDF-induced execution order between communicating job pairs, with each writing task preceding its corresponding reading task, as shown in Fig.~\ref{fig:twoMechainForDDF}(c). Priority arbitration proceeds in two stages: tasks are first filtered by $\text{NP}[1]$ to obtain the highest-precedence subset $HPS$, and then the task with the highest $\text{NP}[0]$ in $HPS$ is selected for execution. Since the DDF-induced order is partial, $HPS$ may include multiple tasks.

% Each new task is assigned a priority group, denoted by $\text{NP}$. Specifically, $\text{NP}[0]$ denotes the priority of its corresponding task in the original task set $\mathbb{T}$, while $\text{NP}[1]$ captures the execution order of job pairs involved in DDF, where each writing task precedes its corresponding reading task, as illustrated in Fig.~\ref{fig:twoMechainForDDF}(c). Priority arbitration is performed in two stages. First, for tasks related by DDF, execution precedence is determined according to $\text{NP}[1]$, and the subset of tasks with the highest precedence, denoted by $HPS$, is selected. Since such precedence is defined only pairwise among communicating tasks and thus forms a partial order, $HPS$ may contain multiple tasks. Second, among the tasks in $HPS$, the one with the highest priority according to $\text{NP}[0]$ is selected for execution.

To mitigate the overhead of instantiating each job as a separate task, an equivalent multi-frame task model (MFTM)~\cite{baruah1999generalized, mok2002multiframe} can be used to represent variations of job attributes in a single task during hyperperiod, thus avoiding per-job task expansion.

\subsubsection{Multi-buffer Communication for Read-From-Intended (RFI)} 
\label{label:Multi-buffer Communication for Read-From-Intended}
\ 
\newline
\indent
The execution dependencies ensure the data required by a reader has been produced, but they do not guarantee that a reader can read the data from the intended writer, i.e., RFI. 
% As shown in Fig.~\ref{fig:twoMechainForDDF}(c), under register communication, job $J_6^3$  can only read data produced by $J_4^4$ due to $J_4^4$'s early finish, although (intended) $J_4^3$'s data has been produced, but overwritten by the former.
% As shown by the dashed circle in Fig.~\ref{fig:twoMechainForDDF}(c), under register communication, $J_6^3$ can only read the data produced by $J_4^4$ because $J_4^4$ finishes earlier. Although $J_6^3$ has an execution dependency on its intended writer $J_4^3$, and the data of $J_4^3$ has already been produced, it is overwritten by the later write of $J_4^4$.
% As shown by the dashed circle in Fig.~\ref{fig:twoMechainForDDF}(c), under register communication, $J_6^3$ reads the data produced by $J_4^4$ because $J_4^4$ finishes earlier. Although $J_6^3$ depended on and was intended for writer $J_4^3$, and $J_4^3$’s data has been produced, it is overwritten by the subsequent write of $J_4^4$.
As shown in Fig.~\ref{fig:twoMechainForDDF}(c), under register communication, the intended dependency (dashed arrow) $J_6^3$ \textit{after} $J_4^3$ does not guarantee that $J_6^3$ \textit{reads from} $J_4^3$, since an intervening write by $J_4^4$ makes (solid arrow) $J_6^3$ \textit{read} $J_4^4$ instead.

\begin{figure}
    \centering
    \includegraphics[width=\linewidth]{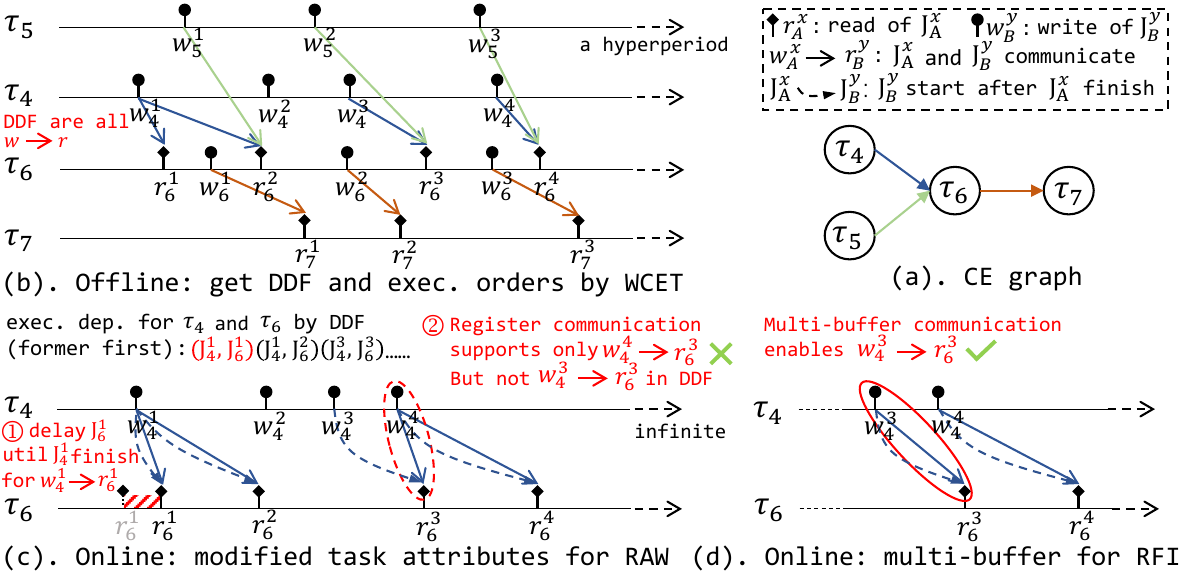}
    \vspace{-0.6cm}
    \caption{Mechanism of the RAW and RFI for the DDF
    }
    \label{fig:twoMechainForDDF}
    \vspace{-0.35cm}
\end{figure}

% of task $\tau_6$ can only read data produced by $J_4^4$ of task $\tau_4$, even though the data from $J_4^3$ has already been produced but cannot be correctly accessed.

To address this issue, we introduce multi-buffer communications in Fig.~\ref{fig:twoMechainForDDF}(d) to support storing multiple data from different writing jobs of the same task and select the intended one to ensure RFI. During online, each reading job selects and reads from the buffer corresponding to its intended writer in the DDF, thereby combining with RAW, the communication relations can be maintained online.

% Furthermore, introducing the multi-buffer mechanism requires determining the buffer size offline to ensure that sufficient historical data versions can be preserved online. 
Furthermore, introducing the multi-buffer mechanism requires determining the buffer size offline to ensure that multiple data produced by jobs of the same task can be preserved online. 
Under the implementation by the MFTM, the buffer size $bs(\tau_w)$ of write task $\tau_w$ only needs to be computed at the task level, as shown in the following. $\tau_r$ denotes a reading task, $\mathbb{RS}(\tau_w)$ represents the set of $\tau_r$s communicate with $\tau_w$, and $re^{\max}(J)$ will be provided in Eqn.~\ref{eq:read_write_bounds}.
\vspace{-0.3cm}

{\footnotesize
\begin{equation*}
bs(\tau_w) = \max_{\tau_r \in \mathbb{RS}(\tau_w)}
\Biggl\{
  \max_{J_r^k, k \in \mathbb{N}^*}
  \bigl|\{\, J_w^l \mid  we(R_{com}(J_r^k)) \le we(J_w^l) \le re^{\max}(J_r^k) \,\}\bigr )|
\Biggr\}
\end{equation*}
}

\section{Proof of the Timing Anomalies Free}
\label{sec:five}
% This chapter takes the Maximum Reaction Time (MRT) as an example to illustrate the latency analysis and prove TA-free by the invariance of immediate forward chains and safe read/write bounds.

Here, taking the MRT as an example to show the proof of TA-free under DDF the Thm.~\ref{them:tafree} by  invariance of immediate forward chains the Thm.~\ref{thm:inva} and  boundary of jobs' read/write times the Thm.~\ref{lem:boudrw}.

\subsection{Invariance of Immediate Forward Chains}

% \textbf{Theorem 1.} Under fixed communication relation, all immediate forward chains $\overrightarrow{ic}$s obtained under any feasible combination of execution times remain unchanged.
\begin {theorem}\label{thm:inva}
Under the fixed communication relations between jobs (DDF), all immediate forward chains $\overrightarrow{ic}$s obtained remain unchanged under all-WCET and any non-all-WCET executions.
% Under a fixed communication relation, all immediate forward chains $\overrightarrow{ic}$s obtained under any feasible combination of execution times remain unchanged.
\end{theorem}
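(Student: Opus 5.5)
The plan is to show that Definition~1 constructs $\overrightarrow{ic}_{E,S}^m$ purely from three inputs: the communication relation $R_{\text{com}}$, the set of writing jobs $wj(R_{\text{com}},m+1)$, and the job identifiers $id(\cdot)$. None of these inputs depends on actual execution times. We argue by induction on the position $i$ in the chain, for a fixed CE chain $E$ and a fixed $m$. Let $S_W$ be the all-WCET schedule and $S$ an arbitrary non-all-WCET schedule, and write $R^{W}_{\text{com}}$ and $R^{S}_{\text{com}}$ for the communication relations realized in them.

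The first and central step is to establish $R^{S}_{\text{com}}=R^{W}_{\text{com}}$, i.e., that the DDF is actually preserved online. For every reader $J_r$ with intended writer $J_w=R_{\text{com}}(J_r)$ in the DDF, two facts are needed.
\begin{itemize}
\item \textbf{RAW.} We need $we(J_w)\le re(J_r)$ in $S$. This follows from the attribute modification of Sec.~\ref{sec:Modification of Task Attribution}. The reader's phase satisfies $O^*(\gamma_r)\ge O^*(\gamma_w)$, so it is released no earlier than its writer. Its $\text{NP}[1]$ value places it strictly below the writer, so while the writer is pending it is never in $HPS$. Hence it cannot start before the writer finishes, whatever the execution times.
\item \textbf{RFI.} The reader must read the writer's buffer slot and not a newer one. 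This follows from the multi-buffer mechanism, provided the buffer size $bs(\tau_w)$ suffices. That holds because $bs(\tau_w)$ counts all writes of $\tau_w$ in $[we(R_{com}(J_r^k)), re^{\max}(J_r^k)]$, and we can take online write times to be $\ge$ the corresponding lower bounds and read times to be $\le re^{\max}$.
\end{itemize}
Together, RAW and RFI give identical reader-to-writer pairs in every execution.

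The second step is the induction itself. For $i=1$, $J_1=J_1^{m+1}$ in both schedules by definition. Assume $J_i$ is identical in both. If $J_i\in wj(R_{\text{com}},m+1)$ (Case~1), then $J_{i+1}=\arg\min_{J\in R_{\text{com}}(J_i)} id(J)$ depends only on $R_{\text{com}}$ and $id$, which are equal, so $J_{i+1}$ is equal. Otherwise (Case~2), $sub(J_i)$ is determined by task indices and not by timing, and the writer set together with the $R_{\text{com}}$ images is unchanged, so the $\arg\min$ again coincides. One detail needs checking: job identifiers refer to release order, i.e.\ the offline all-WCET instance number, not to start order. Since the new tasks are time-triggered with fixed phases, release order is invariant.

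The main obstacle I expect is the RFI part of the first step. The buffer-size argument uses $re^{\max}$ from Eqn.~\ref{eq:read_write_bounds}, which is a boundary property proved later (Thm.~\ref{lem:boudrw}). To avoid circularity, I would first try to prove the read/write bounds using only the TT-FP structure and RAW, without reference to RFI: FP scheduling with fixed releases and reduced execution times should never delay any finish time, given the precedence-encoding priorities. Only after that would I discharge RFI. If that ordering fails, I would instead state Theorem~\ref{thm:inva} conditionally on DDF being preserved, which is how the statement is phrased, and treat the preservation of DDF as a separate lemma.
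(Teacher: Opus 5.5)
Your second step is the paper's proof. The paper shows, via Cases 1 and 2 of Definition 1, that each job's successor across a single edge is fixed once $R_{\text{com}}$ is fixed, and then inducts on chain length; this is equivalent to your induction on the position $i$. Because the theorem takes the DDF as a hypothesis (its online preservation is argued separately, and only informally, through RAW and RFI in the treatment section), your first step is not needed here.

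If you do pursue that step, the monotonicity you plan to prove is not automatic under the two-stage $\text{NP}[1]/\text{NP}[0]$ arbitration. When a reader has a higher $\text{NP}[0]$ than its writer, the $\text{NP}[1]$ filter acts like a completion-triggered release. An earlier writer finish can then let the reader preempt an unrelated job sooner, pushing that job's start and finish beyond their all-WCET values.
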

\vspace{-0.1cm}
% To prove Thm.~\ref{thm:inva} we fist prove its special case Lem.~\ref{lem:siginv}
To prove the  Thm.~\ref{thm:inva}, we first prove the Lem.~\ref{lem:siginv}, a special case with a single edge in a CE chain, like the $\tau_5 \rightarrow \tau_6$ in Fig.~\ref{fig:twoMechainForDDF} (a).
% \textbf{Lemma 1.} Under fixed communication relations, all $\overrightarrow{ic}$s of any edge $\tau_m \rightarrow \tau_{m+1}$ in a CE chain remain unchanged.
\vspace{-0.1cm}
\begin {lemma}\label{lem:siginv} Under DDF, $\overrightarrow{ic}$s of any edge $\tau_m \rightarrow \tau_{m+1}$ in a CE chain are unchanged under all-WCET and any non-all-WCET executions.
\end{lemma}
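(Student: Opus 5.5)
The plan is to show that, for a single edge $\tau_m \rightarrow \tau_{m+1}$, every ingredient of Def.~1 depends only on the communication relation $R_{\text{com}}$ and on job identities, never on actual start or finish times. Once that is established, fixing $R_{\text{com}}$ to the DDF extracted offline fixes the two-job chains $(J_m, J_{m+1})$ for every starting job.

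First, I would argue that online $R_{\text{com}}$ coincides with the offline DDF in every execution, whether all-WCET or not. \textbf{RAW} follows from the modified attributes of Sec.~\ref{sec:Modification of Task Attribution}. Each reader $\gamma_i$ gets phase $O^*(\gamma_i)\ge O^*(J')$ for every writer $J'\in\mathbb{WS}(\gamma_i)$, and a strictly lower $\text{NP}[1]$ precedence. Under TT-FP on a single core, therefore, the reader cannot start before the writer finishes. If the writer is released first or simultaneously, the reader is never in $HPS$ while the writer is pending, so it cannot run (and hence cannot read) until the writer completes. \textbf{RFI} follows from the multi-buffer mechanism of Sec.~\ref{label:Multi-buffer Communication for Read-From-Intended}: each reader selects the buffer slot of its intended writer. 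With the buffer size $bs(\tau_w)$ chosen so that this slot is not overwritten before $re^{\max}$ of the reader, the reader reads exactly its DDF writer. Consequently, for every job $J$, the set $R_{\text{com}}(J)$ and the set of writing jobs $wj(R_{\text{com}},\cdot)$ are identical across all executions.

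Next, I would take an arbitrary job $J_m$ of $\tau_m$ and split according to the two cases of Def.~1. In Case~1 ($J_m$ is a writing job), $J_{m+1}=\arg\min_{J\in R_{\text{com}}(J_m)} id(J)$. This depends only on $R_{\text{com}}(J_m)$ and the job ids, and ids are fixed by release order, which is time-triggered and independent of execution times. In Case~2 ($J_m$ is not a writing job), the set $sub(J_m)$ is determined by task indices alone. Membership in $wj$ and the sets $R_{\text{com}}(J_m')$ are fixed by the previous step, so the argmin over their union is the same job in all executions. In both cases the successor is invariant, which gives the lemma.

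The main obstacle is the first step, specifically proving RAW rigorously. One has to make sure that the two-stage arbitration with a partial order in $\text{NP}[1]$ never lets a reader run before its writer finishes, even when execution times shrink. The delicate part is equal release times and preemption by unrelated higher-priority tasks, where I would argue directly from the arbitration rule that the reader is excluded from $HPS$ while its writer is unfinished. A secondary concern is confirming that the buffer size indeed prevents overwriting. I would take this as guaranteed by the definition of $bs(\tau_w)$, together with the read-time upper bound $re^{\max}$ provided later by Thm.~\ref{lem:boudrw}.
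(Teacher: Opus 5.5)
Your proposal is correct, and its core is exactly the paper's proof: the Case~1/Case~2 split showing that the successor of each $J_m$ is determined solely by the fixed relation $R_{\text{com}}$ and by job ids. Your first step, which derives a fixed $R_{\text{com}}$ from RAW and RFI, goes beyond the paper, because the paper reads ``Under DDF'' as the hypothesis that the communication relation is already fixed; the obstacles you flag there concern the treatment in Sec.~4, not this lemma.
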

% \begin {lemma}\label{lem:siginv} Under fixed communication relations, all $\overrightarrow{ic}$s of any edge $\tau_m \rightarrow \tau_{m+1}$ in a CE chain remain unchanged.
% \end{lemma}
% \textbf{Lemma 1.} Under fixed communication relations, all $\overrightarrow{ic}$s of any edge $\tau_m \rightarrow \tau_{m+1}$ in a CE chain remain unchanged.

\vspace{-0.3cm}
\begin{proof}
% According to Definition 1, for an edge $\tau_m \rightarrow \tau_{m+1}$, the construction process of the $\overrightarrow{ic}$ requires selecting, for each write job $J^r_m$ of task $\tau_m$, a corresponding read job  $J^{r'}_{m+1}$ from the job set of task $\tau_{m+1}$ as its direct successor $J_{m+1}$.
% If, for every $J^r_m$, the corresponding $J^{r'}_{m+1}$ is unchanged under all combinations of jobs' execution times, then the $\overrightarrow{ic}$ of $\tau_m \rightarrow \tau_n$ not change. As stated in Definition 1, there are two cases in determining the successor job $J_{m+1}$ for $J_{m}$:
According to Definition 1, for an edge $\tau_m \rightarrow \tau_{m+1}$, the construction process of the $\overrightarrow{ic}$ requires selecting, for each write job $J^r_m$ of task $\tau_m$, a corresponding read job  $J^{r'}_{m+1}$ from the job set of task $\tau_{m+1}$ as its direct successor $J_{m+1}$.
If, for every $J^r_m$, the corresponding $J^{r'}_{m+1}$ is unchanged under all-wcet and any non-all-wcet executions, then the $\overrightarrow{ic}$ of $\tau_m \rightarrow \tau_n$ not change. As stated in Definition 1, there are two cases in determining the successor job $J_{m+1}$ for $J_{m}$:

Case 1: If $J_m \in wj(R_{\text{com}},m+1)$, i.e., $J_m$ is writing with job of $\tau_{m+1}$ in DDF, its successor $J_{m+1}$ is directly the reader that communicates with $J_m$.
Since the communication is fixed, $J_{m+1}$ is unchanged.

Case 2: If $J_m \notin wj(R_{\text{com}}, m+1)$, then $J_{m+1}$ corresponds to the reading job $J^{l'}_{m+1}$ of task $\tau_{m+1}$ associated with the first subsequent job $J^{r*}_m$ of the same task $\tau_{m}$ for $J_m$ that satisfies Case 1.
Similarly, the communication relation is fixed, both $J^{r*}_m$ and $J^{l'}_{m+1}$ are uniquely unchanged. So $J_{m+1}$ also remains unchanged.

% According to Definition 1, for an edge $\tau_m \rightarrow \tau_{m+1}$, the construction process of the $\overrightarrow{ic}$ requires selecting, for each job $j^m_k$ of task $\tau_m$, a corresponding job $j^{m+1}_l$ from the job set of task $\tau_{m+1}$ as its direct successor $J_{i+1}$.
% If, for every $j^m_k$, the corresponding $j^{m+1}_l$ remains unchanged under all combinations of execution times, then the $\overrightarrow{ic}$ corresponding to $\tau_m \rightarrow \tau_n$ not change. As stated in Definition 1, there are two cases in determining the successor job $J_{i+1}$:

% If $J_i \in wj(R_{\text{com}})$, i.e., $J_i$ is writing task in DDF, its successor $J_{i+1}$ is directly the reading job that communicates with it.
% Since the communication relation is fixed, $J_{i+1}$ is unchanged.

% If $J_i \notin wj(R_{\text{com}})$, then $J_{i+1}$ corresponds to the reading job $j^{m+1}_{l'}$ associated with the first subsequent job $j^m_{k'}$ of the same task that satisfies Case 1.
% Similarly, the communication relation is fixed, both $j^m_{k'}$ and $j^{m+1}_{l'}$ are uniquely determined. So $J_{i+1}$ also remains unchanged.

Therefore, Lem.~\ref{lem:siginv} is proven.
\vspace{-0.1cm}
\end{proof}
\vspace{-0.1cm}

We now prove that the theorem holds for any complete CE chain %$\tau_{1} \rightarrow \tau_{2} \rightarrow \dots \rightarrow \tau_{n}$ 
$\tau_{1} \rightarrow \dots \rightarrow \tau_{n}$
with a length greater than 1, i.e.,  proving Thm~\ref{thm:inva}
\vspace{-0.2cm}
\begin{proof}
Base Case:
For $n=2$, Lem.~\ref{lem:siginv} proves that $\overrightarrow{ic}$ of $\tau_1 \rightarrow \tau_2$ remains unchanged under fixed communication relations (DDF).

Induction Hypothesis:
Assume that the proposition holds for $n = k$; that is, when the communication relations are fixed, all $\overrightarrow{ic}$s of the
$E_q = \tau_1 \rightarrow \tau_2 \rightarrow \dots \rightarrow \tau_k$
remain unchanged.

Inductive Step:
For a CE chain $E_l$ with length $k + 1$,
for Lem~\ref{lem:siginv}, the $\overrightarrow{ic}$ of the edge $\tau_k \rightarrow \tau_{k+1}$ remains unchanged when the communication relation is fixed.
That is, for any writing job $J_k^r$ of the writing task $\tau_k$, there exists a fixed reading job $J_{k+1}^{r'}$ from $\tau_{k+1}$.
Therefore, once the $\overrightarrow{ic}$s of $E_q$ are unchanged, extending them to $E_l$ not alter the successor-selection and thus the  $\overrightarrow{ic}$ of $E_l$ is unchanged.

Therefore, Thm~\ref{thm:inva} is proven.
\end{proof}

% \vspace{-0.5cm}
% \subsection{Bound Times of Job's Read and Write Event}
\subsection{Boundary of Jobs' Read and Write Times}
% \textbf{Lemma 2.}
% For a periodically triggered system scheduled by FP scheduling, if all jobs execute with WCETs, resulting in a scheduling $S_{W}$, the start and finish time of each job represent the upper bounds of their corresponding times under all scheduling $S \in \mathbb{S}$ by feasible combinations of various in jobs' execution times. Similarly, execution based on BCET yields the corresponding lower bound\cite{gunzel2023compositional}.
\begin{theorem}\label{lem:boudrw}
% For a periodically triggered system scheduled by FP scheduling, if jobs all-WCET execution, resulting in a scheduling $S_{W}$, the start and finish time of each job represent the upper bounds of their corresponding times under all scheduling $S \in \mathbb{S}$ by feasible combinations of various in jobs' execution times. Similarly, all-BCET execution yields the corresponding lower bound\cite{gunzel2023compositional}.
For a system under periodically time-triggered FP (TT-FP) scheduling, the schedule obtained with all-WCET execution, denoted as $S_W$, provides upper bounds on the start and finish times of all jobs across any possible schedule $S \in \mathbb{S}$ from all-WCET and any non-all-WCET executions.
Similarly, all-BCET execution can yield the corresponding lower bound\cite{gunzel2023compositional}.
\end{theorem}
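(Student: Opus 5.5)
We prove the WCET half; the BCET half follows by the symmetric argument with all inequalities reversed. The approach is a monotonicity argument for preemptive fixed-priority scheduling of time-triggered jobs on one core. The release times $r(J)$ are fixed offline (phases and periods, including the modified phases $O^*$ of $\mathbb{T'}$). Every schedule $S\in\mathbb{S}$ therefore differs from $S_W$ only in the actual execution demands $c_S(J)\le W(J)$. We will show that shrinking any subset of demands can only move every start and finish time earlier.

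We use the standard characterisation of finish times under work-conserving preemptive FP. Let $hp(J)$ be the set of jobs with priority higher than $J$, including $J$ itself and, for ties, jobs released no later than $J$. Then $f_S(J)$ is the earliest time $t\ge r(J)$ at which all workload of $hp(J)$ released in the relevant busy window has completed. Concretely, $f_S(J)$ is the least fixed point $t$ of
\[
t = t_0 + \sum_{J' \in hp(J),\, t_0 \le r(J') < t} c_S(J'),
\]
where $t_0$ is the start of the level-$hp(J)$ busy interval containing $r(J)$. The right-hand side is nondecreasing both in $t$ and in every $c_S(J')$, and $t_0$ itself moves earlier when demands shrink. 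Hence the least fixed point obtained with $c_S\le W$ is at most the one obtained with $W$, giving $f_S(J)\le f_{S_W}(J)$.

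The start time is handled analogously. $s_S(J)$ is the first instant $t\ge r(J)$ at which no pending job of higher priority remains, i.e.\ the completion of the level-$hp(J)\setminus\{J\}$ backlog present at $r(J)$. It obeys the same kind of monotone fixed-point equation and so also satisfies $s_S(J)\le s_{S_W}(J)$.

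The main obstacle is the two-stage priority arbitration of Section~\ref{sec:Modification of Task Attribution}, where $\text{NP}[1]$ induces only a partial order. We must check that the scheduler is still a fixed, execution-time-independent priority assignment, since otherwise the monotonicity argument breaks. We would first show that the lexicographic rule ($\text{NP}[1]$ first, then $\text{NP}[0]$) is equivalent to a total job-level priority order that is computed offline and does not depend on execution times. For this, embed the DDF precedence DAG into a linear extension, breaking ties by $\text{NP}[0]$. Because every reader is released no earlier than its writers (the $O^*$ adjustment) and ranks strictly below them, a reader can never be pending while an unfinished writer is absent from the ready set in a way that would invert this order. Once this is established, job-level FP with fixed releases applies unchanged.

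An alternative, more elementary route avoids fixed points and is the one we would try if the busy-window bookkeeping becomes messy. Take jobs in priority order and prove by induction that, for every $t$, the cumulative service the higher-priority set $hp(J)$ requires by time $t$ in $S$ is at most that in $S_W$. It follows that the processor time left to $J$ in any window is at least as large in $S$, so $J$ both starts and completes no later. The lower-bound statement for all-BCET is the mirror image of either argument.
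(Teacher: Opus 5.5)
You are right that the crux is whether the two-stage arbitration still behaves as fixed-priority scheduling. The paper only cites G\"unzel et al.\ and asserts this because priorities are fixed offline. Your resolution of that crux, however, does not work. The rule ``filter by $\text{NP}[1]$, then take the highest $\text{NP}[0]$'' is in general not equivalent to any total job order. The two facts you invoke (readers released no earlier than their writers, and ranked below them in $\text{NP}[1]$) do not make it monotone.

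Take a writer $A$, its reader $C$ and an unrelated job $B$, with $\text{NP}[0]$ order $C>B>A$, releases $r(A)=r(C)=0$, $r(B)=1.5$, and $W(A)=2$, $W(B)=0.5$, $W(C)=1$. With all WCETs, $A$ runs on $[0,1.5)$, $B$ on $[1.5,2)$, $A$ on $[2,2.5)$ and $C$ on $[2.5,3.5)$. If $A$ executes for only $1$, then $C$ becomes eligible at $1$ and keeps the processor on $[1,2)$ against the unrelated, lower-ranked $B$, which now runs on $[2,2.5)$. So both $s(B)$ and $f(B)$ increase. Across these runs the scheduler chose $B$ over $A$, $A$ over $C$ and $C$ over $B$, so no linear extension reproduces it. Moreover, a linear extension that puts writers above readers must reverse some $\text{NP}[0]$ comparisons between unrelated jobs, and such pairs do meet online.

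The missing ingredient is a property of how the DDF is extracted. The DDF is read off the original all-WCET FP schedule $S_W$ under register semantics. Hence for every edge $A\to C$ with $\text{NP}[0](C)>\text{NP}[0](A)$, the reader cannot have been pending while $A$ was unfinished, i.e.\ $r(C)\ge f_{S_W}(A)$. With this property the gap closes in three steps:
\begin{enumerate}
\item The $O^*$-adjusted releases never exceed the original start times, because each writer finishes before its reader starts in $S_W$. So plain job-level FP by $\text{NP}[0]$ with these releases reproduces $S_W$ under all-WCET.
\item Let $t$ be the first instant at which the two-stage rule and plain $\text{NP}[0]$-FP, run with the same execution times, disagree. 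The $\text{NP}[0]$-highest ready job $X$ must then have an unfinished, lower-ranked direct writer $A$ (if the blocking predecessor is only transitive, $A$ has not run yet).
\item Monotonicity of plain FP gives $f(A)\le f_{S_W}(A)\le r(X)\le t$ in the plain schedule, which coincides with the actual one before $t$. This contradicts $A$ being unfinished at $t$.
\end{enumerate}
Hence $\text{NP}[1]$ is never decisive online, the scheduler is exactly job-level FP by $\text{NP}[0]$ with fixed releases, and the right total order is $\text{NP}[0]$ itself, not a linear extension.

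Separately, your fixed-point step relies on a false claim: $t_0$ does not move earlier when demands shrink. Suppose a higher-priority job keeps the processor busy from $0$ past $r(J)$ in $S_W$ but finishes before $r(J)$ in $S$. Then the busy interval containing $r(J)$ starts later in $S$, and the two fixed-point equations are no longer comparable term by term. The conclusion survives if you argue via the backlog $Q(t)=\max_{u\le t}\bigl(D[u,t)-(t-u)\bigr)$, where $D[u,t)$ is the demand of $hp(J)$ released in $[u,t)$. $Q$ is pointwise nondecreasing in the demands, and $f(J)$ is its first zero after $r(J)$. So $Q_S(f_{S_W}(J))\le Q_{S_W}(f_{S_W}(J))=0$ gives $f_S(J)\le f_{S_W}(J)$, and the analogous backlog of strictly higher-priority jobs handles $s(J)$. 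Your alternative route needs the same ingredient. Comparing cumulative service from time $0$ does not by itself bound the leftover processor time in an arbitrary window; you need the backlog comparison at the window's start.
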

% \vspace{-0.1cm}
% Although the modification of task attributes proposed in Sec. \ref{sec:Modification of Task Attribution} adopts a priority group, all priorities are statically assigned to tasks in the offline and remain unchanged during online execution. And the scheduler selects the ready task with the highest priority all the time. Moreover, the system still operates under a periodic TT. Therefore, our scheduling satisfies the conditions for Thm~\ref{lem:boudrw}.

Although Sec.~\ref{sec:Modification of Task Attribution} introduces priority groups through task-attribute modification, all priorities are statically determined offline and remain unchanged online. Since the scheduler always selects the highest-priority ready task and the system still follows periodic TT scheduling, the conditions of Thm.~\ref{lem:boudrw} remain satisfied.

Under the scheduling with our treatment, by executing jobs with WCETs and BCETs, respectively in scheduling window $[0, 2H+O_{\max}]$, we can get the upper and lower bounds of the start and finish times for any job $J$, denoted as,$\ s_{S_{W}}(J)$,$\ f_{S_{W}}(J)$,$\ s_{S_{B}}(J)$,$\ f_{S_{B}}(J)$.

Since adopting implicit communications, the upper and lower bounds of its read event and write event time in all schedules $\mathbb{S}$ are equivalent to the bounds of start and finish times, which are:
% \vspace{-0.1cm}
{\small
\begin{equation}
\begin{aligned}
re^{\min}(J) &= s_{S_{B}(J)} \quad &re^{\max}(J) &= s_{S_{W}(J)} \\
we^{\min}(J) &= f_{S_{B}(J)} \quad &we^{\max}(J) &= f_{S_{W}(J)}
\end{aligned}
\label{eq:read_write_bounds}
\end{equation}
}

\subsection{Timing Anomalies Free for MRT}
% From Chapter II, the immediate forward augmented chain ($\overrightarrow{iac}$) can be derived from the $\overrightarrow{ic}$. Specifically, the write event time of the last job in $\overrightarrow{ic}$, denoted as $we(\overrightarrow{ic}[-1])$, is appended to the end of $\overrightarrow{ic}$ as $z'$, while the read event time of the preceding job $pre(\overrightarrow{ic}[0])$—which belongs to the same task as the first job $\overrightarrow{ic}[0]$—is inserted at the beginning of $\overrightarrow{ic}$ as $z$.

% According to Theorem 1, under fixed communication relations, $\overrightarrow{ic}$ remains invariant for any feasible combination of execution times for jobs. Therefore, the upper bound of $\overrightarrow{iac}$ is determined solely by the minimum of $re(pre(\overrightarrow{ic}[0]))$ and the maximum of $we(\overrightarrow{ic}[-1])$, both of which can be obtained from Eqn. (3). Consequently, under schedule $S$ and fixed communication relations, the upper bound of $\overrightarrow{iac}$ for $E$ over all combinations of execution times is Eqn (4).
% \begin{equation}
% \ell^{max}(\overrightarrow{iac}_{E,S}) = we^{max}(\overrightarrow{ic} [ -1 ]) - re^{min}(pre(\overrightarrow{ic} [ 0]))
% \end{equation}

% The formulation of MRT is adapted from Eqn. (2) in Chapter II, as given by the following equation.
% \vspace{-0.1cm}
% \begin{equation*}
% \mathrm{MRT}(E,S) := \max \Bigl\{ \ell^{max}\bigl(\overrightarrow{iac}^m_{E,S}\bigr) \ \big|\ m \in \mathbb{N},\ \overrightarrow{c}_{E,S}^m\ \text{valid} \Bigr\}.
% \end{equation*}
Eqn.~\ref{label:CompMRT1} computes the MRT for a specific schedule $S$ obtained under a fixed scheduling algorithm, corresponding to one specific combination of job execution times. We now extend it to encompass all schedules $\mathbb{S}$ under the same fixed scheduling algorithm, as follows:
% Eqn.~\ref{label:CompMRT1} computes the MRT for a CE chain $E$ under a fixed scheduling algorithm, given any feasible combination of job execution times corresponding to a schedule $S$. Extending this definition, the MRT of $E$ over all schedules $\mathbb{S}$ is:
{\small
\begin{equation}
    MRT(E) =  \max \left\{ \max \Bigl\{ \ell\bigl(\overrightarrow{iac}_{E,S}^m\bigr) \ \big|\ m \in \mathbb{N},\ \overrightarrow{iac}_{E,S}^m\ \text{valid} \Bigr\} \ \big| \ S \in \mathbb{S} \right\}
    \label{label:CompMRT2} 
\end{equation}
}

\begin{theorem}[TA Free]\label{them:tafree}
 When the system executes under the DDF, the MRT obtained from the schedule $S_W$ by all-WCET execution offline, i.e., $MRT(E, S_{W})$, is not exceeded by the results from the schedules $\mathbb{S}$
 by all-WCET and any non-all-WCET executions online, i.e.,
 %  When the system executes under DDF, the MRT obtained from the offline by all-WCET execution, i.e., $MRT(E, S_{W})$, is not exceeded by any scheduling 
 % from all feasible combinations of various jobs' execution times
 % in online execution, i.e.,
\end{theorem}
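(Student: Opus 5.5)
The claim to establish is $MRT(E) = MRT(E,S_W)$, i.e. for every $S\in\mathbb{S}$ and every valid $m$, $\ell(\overrightarrow{iac}_{E,S}^m)\le MRT(E,S_W)$. The plan is to combine Thm.~\ref{thm:inva}, which fixes the job \emph{structure} of each chain, with Thm.~\ref{lem:boudrw}, which bounds the \emph{timing} of the endpoints. Since $S_W\in\mathbb{S}$, the inequality $MRT(E)\ge MRT(E,S_W)$ is immediate from Eqn.~\ref{label:CompMRT2}. Everything therefore reduces to the reverse inequality, which is a per-chain comparison.

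First, fix an arbitrary schedule $S\in\mathbb{S}$ produced online under the treatment, and an index $m$ for which $\overrightarrow{iac}_{E,S}^m$ is valid. RAW is enforced by the modified attributes and RFI by the multi-buffer mechanism with buffer size $bs(\tau_w)$, so the communication relation $R_{\text{com}}$ realized in $S$ coincides with the offline DDF. By Thm.~\ref{thm:inva}, $\overrightarrow{ic}_{E,S}^m=\overrightarrow{ic}_{E,S_W}^m=(J_1,\dots,J_{|E|})$ with the same jobs, and the same holds for the index set of valid chains. Hence $\overrightarrow{iac}_{E,S}^m$ and $\overrightarrow{iac}_{E,S_W}^m$ differ only in the time stamps $z=re(J_1^m)$ and $z'=we(J_{|E|})$.

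Second, I compare the endpoints. Under our treatment the scheduler is still TT-FP with static priorities, as argued after Thm.~\ref{lem:boudrw}, so Eqn.~\ref{eq:read_write_bounds} applies and gives $we_S(J_{|E|})\le we^{\max}(J_{|E|})=f_{S_W}(J_{|E|})=we_{S_W}(J_{|E|})$. For $z$, $J_1^m$ is a job of the head (sampling) task, and by the system model $re(J_1^m)=r(J_1^m)$. This is a release time fixed by the time-triggered phase and period, independent of execution times, so $z_S=z_{S_W}$. Consequently $\ell(\overrightarrow{iac}_{E,S}^m)=z'_S-z_S\le z'_{S_W}-z_{S_W}=\ell(\overrightarrow{iac}_{E,S_W}^m)\le MRT(E,S_W)$. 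Maximizing over $m$ and then over $S$ yields the claim. If one prefers not to use the sampling assumption, $re_S(J_1^m)\ge re^{\min}(J_1^m)$ still holds, but that would only give the weaker bound $we^{\max}-re^{\min}$. So I will explicitly invoke $re=r$ for head tasks to make $z$ execution-time invariant.

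The main obstacle is the first step: justifying that the online schedule really realizes the offline DDF for every execution-time vector. RAW holds because a reader is released no earlier than its writers and has strictly lower $\text{NP}[1]$ precedence, so it cannot start before those writers finish. RFI additionally requires that the intended writer's buffer slot is not overwritten before the reader reads it. This needs the buffer-size formula, whose window ends at $re^{\max}(J_r^k)$, and that in turn relies on Thm.~\ref{lem:boudrw} applied to the modified task set. I would argue this carefully and then handle validity of $m$ in a short remark: valid indices depend only on the fixed job structure and the steady-state window $[0,2H+O_{\max}]$, so they coincide between $S$ and $S_W$.
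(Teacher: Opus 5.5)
Your proposal is correct and follows essentially the paper's route. Thm.~\ref{thm:inva} fixes $\overrightarrow{ic}_{E,S}^m=\overrightarrow{ic}_{E,S_W}^m$, the head-task sampling assumption $re=r$ makes $z$ independent of execution times, and Thm.~\ref{lem:boudrw} bounds $z'$ by $f_{S_W}$, so each chain's length is maximized at $S_W$. Your extra remarks (the trivial direction via $S_W\in\mathbb{S}$, the validity of $m$, and why RAW/RFI realize the DDF online) go slightly beyond the paper, which treats the DDF realization as the theorem's hypothesis.
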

\vspace{-0.4cm}
% \textbf{Theorem 2. (TA-free)} 
%  When the system executes under DDF, the MRT obtained from the offline schedule by all WCETs $MRT(E, S_{W})$, is not exceeded by any scheduling 
%  from all feasible combinations of various jobs' execution times
%  in online execution, i.e.,
{\small
 \begin{equation*}
     MRT(E) = MRT(E,S_{W}) = \max \Bigl\{ \ell\bigl(\overrightarrow{iac}_{E,S_{W}}^m\bigr) \ \big|\ m \in \mathbb{N},\ \overrightarrow{iac}_{E,S_{W}}^m\ \text{valid} \Bigr\}.
 \end{equation*}
}
\vspace{-0.5cm}
\begin{proof}
By Thm~\ref{thm:inva}, under fixed communication relations (DDF), $\overrightarrow{ic}_{E,S}^m$ of sampling point $m$ remains unchanged across any non-all-WCET exections compared to all-WCET, i.e., $\overrightarrow{ic}_{E,S}^m = \overrightarrow{ic}_{E,S_{W}}^m$. Therefore, no new $\overrightarrow{ic}$s are generated during online execution.

According to Sec.~\ref{sec:Job Chain and End-to-End latency}, the $\overrightarrow{iac}_{E,S}^m$ can be derived from $\overrightarrow{ic}_{E,S}^m$. The write event time $we(\overrightarrow{ic}_{E,S}^m[-1])$ of the last job in $\overrightarrow{ic}_{E,S}^m$ is appended as $z'$ to the end of the chain, while the read time $re(pre(\overrightarrow{ic}_{E,S}^m[0]))$ of the previous job of the first job’s task $pre(\overrightarrow{ic}_{E,S}^m$ $[0])$ is added as $z$ to the start of the chain. Since the read event time (sampling) of the first task in the CE chain coincides with its release time as defined in the last of Sec.~\ref{lab:System Model}, we have $re(pre(\overrightarrow{ic}_{E,S}^m[0])) = r(pre(\overrightarrow{ic}_{E,S}^m[0]))$, which is a constant can be computed by its period, phase and id.

Furthermore, since $\overrightarrow{ic}_{E,S}^m$ is identical to $\overrightarrow{ic}_{E,S_{W}}^m$ for all $S \in \mathbb{S}$. But variation lies in the job finishing times for different $S$, leading to multiple $\overrightarrow{iac}_{E, S}^m$ corresponding to a $\overrightarrow{ic}_{E, S_W}^m$. The max length of them is: $max \{ \ell(\overrightarrow{iac}_{E,S}^m) = f_{S}(\overrightarrow{iac}_{E,S}^m [ -1 ]) - r(pre(\overrightarrow{iac}_{E,S}^m [ 0])) \ | \ S \in \mathbb{S} \} $.

% This expression reaches its maximum when $f_S(\overrightarrow{iac}_{E,S}^m[-1])$ is max. According to Eqn.~\ref{eq:read_write_bounds} from Thm~\ref{lem:boudrw}, this maximum occurs under the WCET-based schedule $S_{W}$, i.e., $\max \{ \ell(\overrightarrow{iac}_{E,S}^m) \ | \ S \in \mathbb{S} \} = \ell(\overrightarrow{iac}_{E,S_{W}}^m)$.
%  Therefore, it follows that: $MRT(E) = MRT(E,S_{W})$. 
This expression reaches its maximum when $f_S(\overrightarrow{iac}_{E,S}^m[-1])$ is max. According to Eqn.~\ref{eq:read_write_bounds} from Thm~\ref{lem:boudrw}, this maximum occurs when $S$ is the all-WCET schedule $S_{W}$. So, $\max \{ \ell(\overrightarrow{iac}_{E,S}^m) \ | \ S \in \mathbb{S} \} = \ell(\overrightarrow{iac}_{E,S_{W}}^m)$.
 Therefore, it follows that: $MRT(E) = MRT(E,S_{W})$. 
 % Theorem 3 is proven, TA eliminated for MRT, and the same for MDA omitted.
 % Theorem 3 is thus proven, implying that TAs are eliminated for MRT; the corresponding proof for MDA is omitted for brevity. 
 Thm~\ref{them:tafree} is thus proven, implying that TA-free for MRT.
 \end{proof}
 \vspace{-0.2cm}
The same method can be applied and proved TA-free for MDA.

\section{Experiment Evaluation}
\begin{figure}[!t]
    \centering
    \includegraphics[width=0.95\linewidth]
    {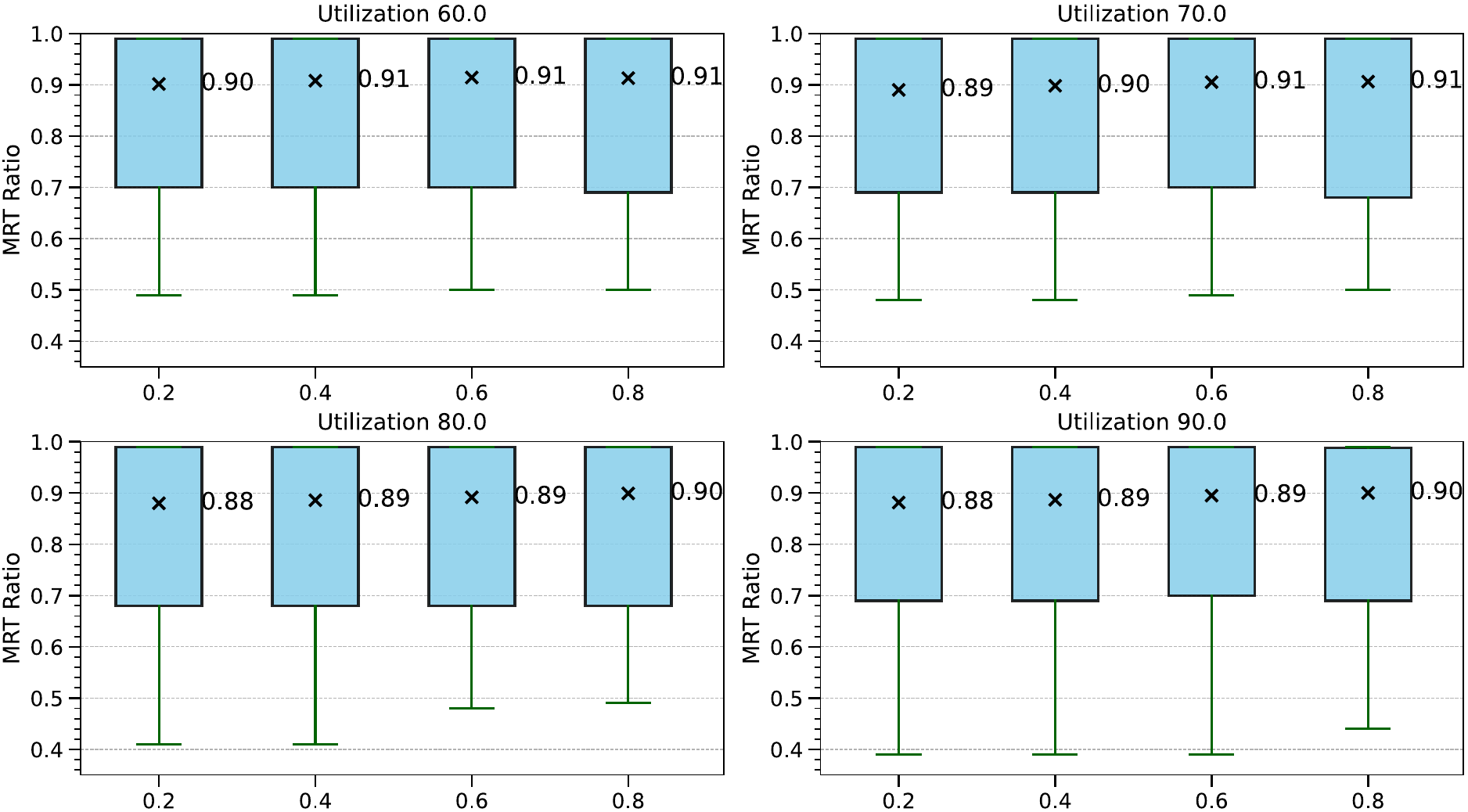}
    \vspace{-0.3cm}
    \caption{ MRT ratio for Our/M23 in various $\alpha$ and $U$
    }
    \vspace{-0.1cm}
    \label{fig:MRTRtio}
    \vspace{-0.4cm}
\end{figure}

% \begin{figure*}[!t]
%     \centering
%     % === 子图 (a) ===
%     \subfloat[MRT theoretical jitter]{%
%         \includegraphics[width=0.32\textwidth]{picture/exper/E3_5_test.png}
%     }\hfill
%     % === 子图 (b) ===
%     \subfloat[average reaction time on online]{%
%         \includegraphics[width=0.32\textwidth]{picture/exper/E2_6_test.png}
%     }\hfill
%     % === 子图 (c) ===
%     \subfloat[multi-buffer's memory overhead]{%
%         \includegraphics[width=0.32\textwidth]{picture/exper/E5_6_test.png}
%     }
%     \vspace{-0.2cm}
%     \caption{ Results for three experiments}
%     \vspace{-0.4cm}
%     \label{fig:three_in_a_row}
% \end{figure*}

\begin{figure*}[!t]
    \centering
    \begin{minipage}[t]{0.31\textwidth}
        \centering
        \includegraphics[width=\textwidth]{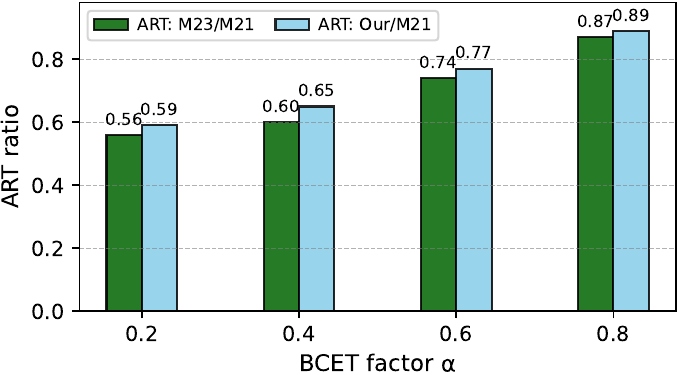}
        \vspace{-0.7cm}
        \caption{Online average reaction time}
        \vspace{-0.4cm}
        \label{fig:avg_rt}
    \end{minipage}\hfill    
    \begin{minipage}[t]{0.31\textwidth}
        \centering
        \includegraphics[width=\textwidth]{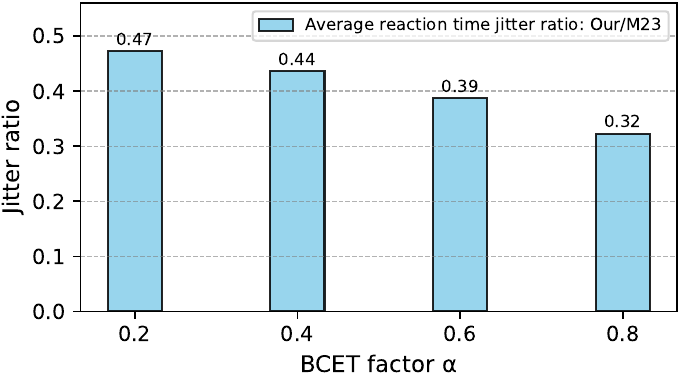}
        \vspace{-0.7cm}
        \caption{MRT jitter}
        \vspace{-0.4cm}
        \label{fig:mrt_jitter}
    \end{minipage}\hfill
    \begin{minipage}[t]{0.332\textwidth}
        \centering
        \includegraphics[width=\textwidth]{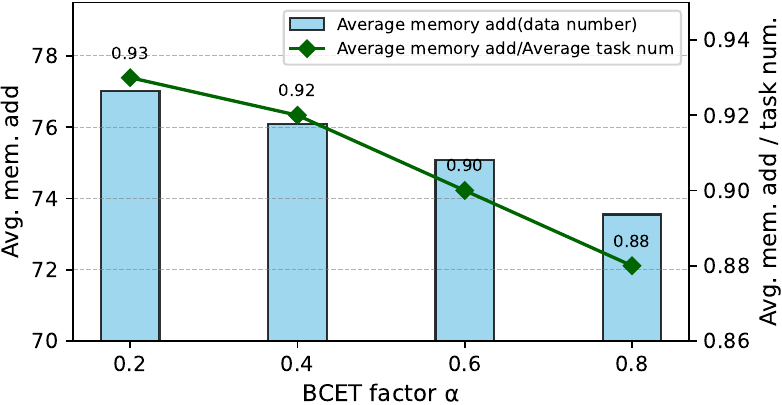}
        \vspace{-0.7cm}
        \caption{Memory overhead}
        \vspace{-0.4cm}
        \label{fig:mem_overhead}
    \end{minipage}
    
\end{figure*}

The end-to-end latency is critical for verifying the timing behavior of CE chains in automotive systems.
We evaluated our treatment using synthetic task sets and chains derived from the Automotive Benchmarks~\cite{kramer2015real}, and compared it with SOTA methods~\cite{gunzel2021timing, gunzel2023compositional}.
% Evaluation metrics include the MRT, average reaction times (ART), and jitters of reaction time, with an analysis of schedulability. Furthermore, we analyzed the memory overhead introduced by the multi-buffer communication.

\subsection{Task Set and Cause-effect Chain Generation}
Task sets and CE-chains generation by the method in~\cite{gunzel2023compositional} with the configurations in Automotive Benchmarks~\cite{kramer2015real}. 

Each task $\tau_i$ is characterized by its WCET $W(\tau_i)$, period $T(\tau_i)$, phase $O(\tau_i)$, and priority $P(\tau_i)$.
% The task generation process adheres to the~\cite{kramer2015real} and proceeds as follows: 
The task generation process is outlined as follows:
$T(\tau_i)$ is randomly selected from $\{1, 2, 5, 10, 20, 50,$ $ 100, 200, 1000\}$ (ms) based on~\cite{kramer2015real}; $W(\tau_i)$ is get by multiplying its average execution time—generated according to a Weibull distribution satisfying constraints in~\cite{kramer2015real}, with a random scaling factor uniformly distributed in $[f_{\min}, f_{\max}]$; Before applying our treatment, the system uses RM scheduling, so $P(\tau_i)$ is determined by $T(\tau_i)$. Our treatment is also applicable to other FP scheduling.

Task utilization is $U_i = C_i / T_i$.
For target total utilizations $U \in \{60\%, 70\%, 80\%, 90\% \}$, 1000 task sets are generated.
Each task set is created by randomly generating 1000–1500 candidate tasks, then selecting a subset $\mathcal{T}$ of candidate tasks to satisfy $|\sum_{\mathcal{T}} U_i - U| \le 0.01$.

Based on the generated task sets, a corresponding set of CE chains is constructed according to the rules in~\cite{kramer2015real}. Specifically, for each CE chain, the number of activation modes $P_j \in \{1,2,3\}$ is randomly selected following the distribution in Table VI of~\cite{kramer2015real}. Then, $P_j$ distinct periods are randomly chosen from the task set to form the period set $T_j$. For each period in $T_j$, 2--5 tasks are further selected without replacement from the corresponding task group according to the distribution in Table VII of~\cite{kramer2015real}. Cyclic dependencies are explicitly excluded to ensure that the resulting CE graph remains acyclic. Each task set finally yields 30--60 CE chains, forming the benchmark used for subsequent evaluation.

\subsection{Evaluation Results}
% For each utilization $U \in \{60\%,70\%,80\%,90\%\}$, we generate 1000 benchmarks. 
% For each benchmark and BCET factor $\alpha \in \{0.2,0.4,0.6,0.8\}$ (with $B(\tau_i)=\alpha W(\tau_i)$), we evaluate all performance metrics under the three methods.

% For each utilization level $U \in \{60\%, 70\%, 80\%, 90\%\}$, we generate 1000 benchmarks. 
% For every benchmark, we conduct four groups of experiments corresponding to BCET factors $\alpha \in \{0.2, 0.4, 0.6, 0.8\}$, which specifing each task's BCET $B(\tau_i)=\alpha \cdot W(\tau_i)$.
% Each group evaluates multiple performance metrics under the three methods.

We conducted experiments with the 1000 task sets and their chains for each combinations of BCET factors $\alpha \in \{0.2, 0.4, 0.6, 0.8\}$ and $U \in \{60\%,70\%,80\%,$ $90\%\}$, where $\alpha$ sets task’s BCET to $\alpha \cdot W(\tau_i)$.

% $BCET = W(\tau_i) \times \alpha$.

% We conducted experiments with the 1000 task sets for different combinations of BCET factors $\alpha = \{0.2, 0.4, 0.6, 0.8\}$ and the aforementioned values of $U \in \{60\%,70\%,80\%,90\%\}$, where $BCET = W(\tau_i) \times \alpha$.

\subsubsection{Max reaction time (MRT)}
We compared the MRTs obtained by RM scheduling using a safe analysis method for TA in~\cite{gunzel2023compositional} (M23) and TA-free treatment by enforcing all-WCET execution online ~\cite{gunzel2021timing} (M21) with those after applying our treatment across different $U$ and $\alpha$. 
The $\frac{MRT_{Our}}{MRT_{M23}}$ and $\frac{MRT_{Our}}{MRT_{M21}}$ was computed for each CE chain.

% In Fig.~\ref{fig:MRTRtio}, the boxplot indicates that $MRT_{OUR}$ is smaller than $MRT_{M23}$ in about 80\%–96\% of cases, with a maximum reduction of up to 60\%. This improvement stems from fixing the communication relations, which prevents changes in the structure of immediate forward chains ($\overrightarrow{ic}$) and thus eliminates periodic-level delay accumulation by job-selection shifts.

The boxplot in Fig.~\ref{fig:MRTRtio} illustrates the ratio $\frac{MRT_{Our}}{MRT_{M23}}$ for all CE chains exhibiting TAs. Chains without anomalies are not shown, as their results are identical. It can be observed that $MRT_{Our}$ is consistently lower than $MRT_{M23}$, with an average reduction of 9\%–12\%, and up to 61\% in the best case. This demonstrates that our method effectively reduces the MRT. This improvement stems from fixing communication relations, which prevents changes in the structure of immediate forward chains and thus eliminates periodic-level delay accumulation from job-selection shifts. 
%Since $MRT_{our}$ is based on the actual execution with WCET, it serves as an exact upper bound. In contrast, $MRT_{M23}$, computed based on 抽象解释去构造可能是一个近似的上界.
% Since $MRT_{our}$ is based on the actual execution, it serves as an exact upper bound. In contrast, $MRT_{M23}$, computed based on the construction of abstract interpertain, may provide an approximate upper bound.

$MRT_{our}$ is an exact bound based on a actual execution. But $MRT_{M23}$ is approximate, constructed by abstract interpretation~\cite{cousot1977abstract}.

% In a few cases, $MRT_{OUR}$ is slightly higher than $MRT_{M23}$ (no more than 5\%). This occurs when the execution dependencies postpone the release of certain tasks, indirectly advancing others’ start times. If the chain causing the MRT coincides with the one derived under WCET, the earlier start slightly increases the reaction time, though the effect remains marginal.

In addition, since $M21$ eliminates TAs, their MRT equals ours and is not shown in Fig.~\ref{fig:MRTRtio}. However, $M21$ achieves TA-free by enforcing all-WCET execution online, causing large average reaction times.

\subsubsection{Average reaction time (ART)}
\label{label:Average reaction time}
For each system, we performed online simulation executions where the execution times of all jobs in $M23$ and ours were randomly sampled from the $[BCET, WCET]$. The ART ratio of each CE chain was computed $\frac{ART_{Our}}{ART_{M21}}$ and $\frac{ART_{M23}}{ART_{M21}}$. Fig.~\ref{fig:avg_rt} shows that both $M23$ and our method achieve significantly lower ARTs than $M21$, as $M21$ eliminates TAs by enforcing all-WCET execution online ($ART=MRT$).
Compared with M23, which retains the original scheduling and thus achieves the lowest average latency but still has TAs, our approach incurs only a 2\%–5\% increase.

\subsubsection{Reaction time jitter}
We computed the jitter ratio between $M23$ and ours as $\frac{MRT_{Our} - mRT_{Our}}{MRT_{M23} - mRT_{M23}}$, and averaged the results across $\alpha$. The min reaction time ($mRT$) of our can be derived by the Eqn.~\ref{eqn:mRT}, whose proof, like MRT, is omitted. $mRT_{M23}$ is obtained by extending the method in~\cite{gunzel2023compositional} to construct the shortest abstract chain.

As shown in Fig.~\ref{fig:mrt_jitter}, our method consistently has a lower average jitter than $M23$. This is because fixing the structure of the immediate forward chain reduces timing uncertainty due to job dependency variations, yielding a more stable MRT and lower jitter. 
% \vspace{-0.0cm}
{\footnotesize
 \begin{equation}
     mRT_{Our}(E) = mRT(E,S_{B}) = \min \Bigl\{ \ell\bigl(\overrightarrow{iac}_{E,S_{B}}^m\bigr) \ \big|\ m \in \mathbb{N},\ \overrightarrow{iac}_{E,S_{B}}^m\ \text{valid} \Bigr\}
     \label{eqn:mRT}
 \end{equation}
}

The method in $M21$, enforcing all-WCET executions online, has a zero jitter, but a pessimistic ART in Sec. ~\ref{label:Average reaction time}.

\subsubsection{Memory overhead of the multi-buffer communication}
As shown in Fig.~\ref{fig:mem_overhead}, the multi-buffer communication only increases slightly in overall memory usage (measured by the number of stored data items) computed by the method in Sec.~\ref {label:Multi-buffer Communication for Read-From-Intended}.

\subsubsection{Schedulability}
% Our treatment delays execution of certain tasks by adjusting their release times, thereby narrowing their feasible scheduling windows. However, our analysis across thousands of task sets for different utilization shows that if a task set is schedulable under the original scheduling, it remains schedulable after applying our treatment. 

% Our treatment delays execution of certain tasks by adjusting their release times, thereby narrowing their feasible scheduling windows. Although this could theoretically reduce schedulability, analysis across thousands of task sets for different utilization shows that if a task set is schedulable under the original scheduling, it remains schedulable after applying our treatment. Thus, our treatment has a negligible impact on schedulability.

Our treatment delays certain tasks by adjusting their release times, narrowing their scheduling windows. However, analysis of thousands of task sets with varying utilization shows that if a task set is schedulable under the original scheduling, it remains schedulable after applying our treatment.

% \vspace{-0.2cm}
\section{RELATED WORK}
The problem of timing anomalies (TAs)~\cite{reineke2006definition, lundqvist1999timing, 9904748, binder2021still, cassez2012timing,HTA} has been extensively studied since its introduction by Graham (1966)~\cite{graham1969bounds}. Most existing work focuses on WCRT analysis. Voudouris et al.~\cite{voudouris2017timing} proposed LAZY, a TA-free scheduling algorithm for event-triggered non-preemptive multicore systems. Chen et al.~\cite{TAFree1} extended it to condition DAG models, and Dai et al.~\cite{dai2021timing} further applied it to periodic non-preemptive tasks. Lin et al.~\cite{lin2023scheduling} systematically analyzed TAs in self-suspending models, and more recently,  Lin et al.~\cite{lin2025quasi} explored TAs in scheduling the Lingua Franca programs.

However, research~\cite{gunzel2021timing,gunzel2023compositional, TA_analysis_1} on TAs in end-to-end latency~\cite{davare2007period, vincentelli2007optimizing, zhu2013optimization, hamann2017communication, teper2022end,gunzel2021timing,gunzel2023compositional, TA_analysis_1} remains limited. Compared to the WCRT, end-to-end anomalies are more complex, as they depend on individual job response variations and inter-job interactions in a chain.  G{\"u}nzel et al.~\cite{gunzel2021timing} addressed TAs at the scheduling level by enforcing WCET execution, which, however, degrades average performance. Other works~\cite{gunzel2023compositional, TA_analysis_1} proposed safe analysis methods that bound latency without modifying scheduling, but since TAs still exist, the latency remains large.

\vspace{-0.2cm}
\section{CONCLUSION}
% This work proposed the first treatment to eliminate TA in end-to-end latency for cause–effect chains with negligible loss in average latency, which is based on Deterministic Data Flow (DDF). Experimental results show that our method not only effectively removes timing anomalies but also significantly reduces the max end-to-end latency, average end-to-end latency, and end-to-end latency jitter.

% This work proposed the first treatment to eliminate timing anomalies (TAs) in end-to-end latency for cause–effect chains with negligible loss in average latency, which is based on Deterministic Data Flow. 
% In addition, we formally prove its TA-free property.
% Experimental results show our method significantly reduces the maximum end-to-end latency, average end-to-end latency, and latency jitter.
This work proposed the first treatment to eliminate timing anomalies (TAs) in end-to-end latency for cause–effect chains, with negligible average latency loss, based on Deterministic Data Flow. 
% Experimental results show our method significantly reduces the maximum end-to-end latency, average end-to-end latency, and latency jitter.

% In this work, we presented the first approach to eliminating timing anomalies (TAs) in the end-to-end latency of cause--effect chains with negligible average-latency loss, based on Deterministic Data Flow. We formally established its TA-free property and demonstrated through experiments that it significantly reduces maximum and average end-to-end latency, as well as latency jitter.

% \begin{acks}
% This work was supported in part by the Strategic Priority Research Program of the CAS under Grant Nos. XDB0660101, XDB0660000, and XDB0660100, in part by the NSFC under Grant No. 62502489.
% % , and in part by the Jiangsu Provincial Natural Science Foundation under Grant Nos. BK20241818, BK20251815, and BK20250479.
% \end{acks}

\vspace{0.2cm}
\noindent
\textbf{Acknowledgments}
\small
This work was supported in part by the Strategic Priority Research Program of the Chinese Academy of Sciences, Grant Nos. XDB0660101, XDB0660000, and XDB0660100, in part by the National Natural Science Foundation of China under Grant and 62502489, in part by Jiangsu Provincial Natural Science Foundation under Grants BK20241818, BK20251815, and BK20250479,  in part by  Anhui Provincial Natural Science Foundation under Grant 2508085MF141.

%%
%% The acknowledgments section is defined using the "acks" environment
%% (and NOT an unnumbered section). This ensures the proper
%% identification of the section in the article metadata, and the
%% consistent spelling of the heading.
% \begin{acks}
% To Robert, for the bagels and explaining CMYK and color spaces.
% \end{acks}

%%
%% The next two lines define the bibliography style to be used, and
%% the bibliography file.

\let\balance\relax 
\bibliography{reference}

%%
%% If your work has an appendix, this is the place to put it.
\appendix

\end{document}